\newcommand{\bra}[1]{{\left\langle{#1}\right\vert}}
\newcommand{\ket}[1]{{\left\vert{#1}\right\rangle}}
\newcommand{\qw}[1][-1]{\ar @{-} [0,#1]}
\newcommand{\qwx}[1][-1]{\ar @{-} [#1,0]}
\newcommand{\gate}[1]{*+<.6em>{#1} \POS ="i","i"+UR;"i"+UL **\dir{-};"i"+DL **\dir{-};"i"+DR **\dir{-};"i"+UR **\dir{-},"i" \qw}
\newcommand{\meter}{*=<1.8em,1.4em>{\xy ="j","j"-<.778em,.322em>;{"j"+<.778em,-.322em> \ellipse ur,_{}},"j"-<0em,.4em>;p+<.5em,.9em> **\dir{-},"j"+<2.2em,2.2em>*{},"j"-<2.2em,2.2em>*{} \endxy} \POS ="i","i"+UR;"i"+UL **\dir{-};"i"+DL **\dir{-};"i"+DR **\dir{-};"i"+UR **\dir{-},"i" \qw}
\newcommand{\control}{*!<0em,.025em>-=-<.2em>{\bullet}}
\newcommand{\ctrl}[1]{\control \qwx[#1] \qw}
\newcommand{\targ}{*+<.02em,.02em>{\xy ="i","i"-<.39em,0em>;"i"+<.39em,0em> **\dir{-}, "i"-<0em,.39em>;"i"+<0em,.39em> **\dir{-},"i"*\xycircle<.4em>{} \endxy} \qw}
\newcommand{\multigate}[2]{*+<1em,.9em>{\hphantom{#2}} \POS [0,0]="i",[0,0].[#1,0]="e",!C *{#2},"e"+UR;"e"+UL **\dir{-};"e"+DL **\dir{-};"e"+DR **\dir{-};"e"+UR **\dir{-},"i" \qw}
\newcommand{\ghost}[1]{*+<1em,.9em>{\hphantom{#1}} \qw}
\newcommand{\gategroup}[6]{\POS"#1,#2"."#3,#2"."#1,#4"."#3,#4"!C*+<#5>\frm{#6}}
\newcommand{\rstick}[1]{*!L!<-.5em,0em>=<0em>{#1}}
\newcommand{\lstick}[1]{*!R!<.5em,0em>=<0em>{#1}}
\newcommand{\Qcircuit}{\xymatrix @*=<0em>}
\def\AbsSAT{\ComplexityFont{AbsSAT}}
\def\QC{\ComplexityFont{QC}}
\def\strong{\ComplexityFont{STR}}
\def\weak{\ComplexityFont{WEAK}} 
\def\diag{\textrm{diag}}
\def\CX{\textrm{CX}}
\def\postcnu{\ComplexityFont{post}\mathcal{C}_\nu\ComplexityFont{P}}
\def\postBQP{\ComplexityFont{postBQP}}
\def\postBPP{\ComplexityFont{postBPP}}
\def\Tof{\textrm{Tof}}
\def\anc#1{|\vec #1 \rangle_A}
\newcolumntype{L}[1]{>{\raggedright\let\newline\\\arraybackslash\hspace{0pt}}m{#1}}
\newcolumntype{C}[1]{>{\centering\let\newline\\\arraybackslash\hspace{0pt}}m{#1}}
\newcolumntype{R}[1]{>{\raggedleft\let\newline\\\arraybackslash\hspace{0pt}}m{#1}}
\begin{document}
\title{Further extensions of Clifford circuits and their classical simulation complexities}
\author{Dax Enshan Koh}
\address{Department of Mathematics, Massachusetts Institute of Technology, Cambridge, Massachusetts 02139, USA}
\email{daxkoh@mit.edu}

\begin{abstract}
Extended Clifford circuits straddle the boundary between classical and quantum computational power. Whether such circuits are efficiently classically simulable seems to depend delicately on the ingredients of the circuits. While some combinations of ingredients lead to efficiently classically simulable circuits, other combinations, which might just be slightly different, lead to circuits which are likely not. We extend the results of Jozsa and Van den Nest [Quant. Info. Comput. 14, 633 (2014)] by studying two further extensions of Clifford circuits. First, we consider how the classical simulation complexity changes when we allow for more general measurements. Second, we investigate different notions of what it means to `classically simulate' a quantum circuit. These further extensions give us 24 new combinations of ingredients compared to Jozsa and Van den Nest, and we give a complete classification of their classical simulation complexities. Our results provide more examples where seemingly modest changes to the ingredients of Clifford circuits lead to ``large" changes in the classical simulation complexities of the circuits, and also include new examples of extended Clifford circuits that exhibit ``quantum supremacy", in the sense that it is not possible to efficiently classically sample from the output distributions of such circuits, unless the polynomial hierarchy collapses.

\smallskip
\noindent \textit{Keywords.} Extended Clifford circuits, classical simulation complexity, quantum supremacy
\end{abstract}

\maketitle

\renewcommand\labelitemi{\tiny$\bullet$}
\section{Introduction}

Clifford circuits are an important class of circuits in quantum computation \cite{Gottesman98}. They have found numerous applications in quantum error correction \cite{GottesmanThesis}, measurement-based quantum computation \cite{Raussendorf, RaussendorfBriegel} as well as quantum foundations \cite{Pusey, Spekkens, penney2016quantum}. One of the central results about Clifford circuits is the Gottesman-Knill Theorem \cite{Gottesman98}, which states that such circuits can be efficiently simulated on a classical computer, and hence do not provide a speedup over classical computation. But this is known to be true only in a restricted setting -- whether or not we can efficiently classically simulate such circuits seems to depend delicately on the `ingredients' of the circuit, for example, on the types of inputs we allow, whether or not intermediate measurements are adaptive, the number of output lines, and even on the precise notion of what it means to \textit{classically simulate} a circuit. These cases were considered by Jozsa and Van den Nest \cite{Jozsa}, who showed that many of these `extended' Clifford circuits are in fact not classically simulable under plausible complexity assumptions.

One of the main motivations for studying extended Clifford circuits is that they shed light on the relationship between quantum and classical computational power. Are quantum computers more powerful than their classical counterparts? If so, what is the precise boundary between their powers? One approach to answering this question is to consider restricted models of quantum computation and study their classical simulation complexities, i.e. how hard it is to classically simulate them. For example, suppose that we start with a restricted model that is efficiently classically simulable. If adding certain ingredients to the restricted model creates a new class that is universal for quantum computation, then we could regard those ingredients as an essential `resource' for quantum computational power \cite{Jozsa}. Extended Clifford circuits, as a restricted model of quantum computation, are especially well-suited for this approach as they straddle the boundary between classical and quantum computational power. One could give many examples where adding a seemingly modest ingredient to an extended Clifford circuit changes it from being efficiently classically simulable to one that is likely not.

Understanding how the classical simulation complexities of extended Clifford circuits change when various ingredients are added is a central goal of this paper. In \cite{Jozsa}, Jozsa and Van den Nest tabulate the classical simulation complexities of extended Clifford circuits with 16 different combinations of ingredients. In particular, they consider the different combinations of ingredients that arise from 4 binary choices: computational basis inputs vs product state inputs, single-line outputs vs multiple-line outputs, nonadaptive measurements vs adaptive measurements, and weak vs strong simulation. They show that the classical simulation complexities of the extended Clifford circuits are of 4 different types (we use slightly different terminology here): (i) $\P$, which means that the circuits can be efficiently simulated classically, (ii) $\QC$, which means that the circuits are universal for quantum computation, (iii) $\#\P$, which means that the problem of classically simulating the circuits is a $\#\P$-hard problem, and (iv) $\PH$, which means that if the circuits are efficiently classically simulable, then the polynomial hierarchy collapses. 

In this paper, we extend the results in \cite{Jozsa} in two different ways. First, we study how the classical simulation complexity changes when we employ a weaker notion of simulation than strong simulation, which we call $\strong(n)$ simulation (short for strong-$n$ simulation). While strong simulation requires that the joint probability as well as any marginal probabilities be computed, in $\strong(n)$ simulation, we require only that the joint probability be computed. Note that such a notion seems incomparable with weak simulation. Second, we study how the classical simulation complexity changes when we allow for general product measurements (called OUT(PROD)) instead of just the computational basis measurements (called OUT(BITS)) that were considered in \cite{Jozsa}. With these additional ingredients, the number of different combinations of ingredients grows to 40. In Table \ref{mainTable}, we tabulate the classical simulation complexities of each of these cases.

We now make a few remarks about the extended Clifford circuits labeled in Table \ref{mainTable} by $\PH$. These are examples of `intermediate'  or restricted quantum circuit models which are not believed to be universal for quantum computation (or perhaps even classical computation), but which exhibit a form of `quantum supremacy' \cite{preskill2012quantum,aaronsonchen}, in the sense that they can sample from distributions that are impossible to sample from classically, unless the plausible complexity assumption of the polynomial hierarchy being infinite is false. In \cite{Jozsa}, Jozsa and Van den Nest give an example of such a circuit model: nonadaptive Clifford circuits with product state inputs and computational basis measurements. In this paper, we show that the same behavior holds if we restricted our circuits to having computational basis inputs but allowed them to have arbitrary single-qubit measurements performed at the end of the circuit (see Theorem \ref{Thm3}). Note that a similar `quantum supremacy' behavior can be found in several other restricted quantum models, like boson sampling \cite{AaronsonArkhipov}, IQP circuits \cite{bremner2010classical, bremner2015average, bremner2016achieving}, QAOA \cite{farhi2016quantum}, DQC1 circuits \cite{morimae2014hardness}, and a model based on integrable quantum theories in 1+1 dimensions \cite{mehraban2015computational, aaronson2016computational}.

The rest of the paper is structured as follows. In Section \ref{sec:prelim}, we introduce the Clifford circuit model and discuss various extensions of Clifford circuits. In Section \ref{sec:Notions}, we define different notions of classical simulation of quantum computation. In Section \ref{sec:results}, we summarize our main results in the form of Table \ref{mainTable} and discuss some implications of our results. Our main theorems are Theorems \ref{Thm1}--\ref{Thm6}, whose proofs are presented in Appendices \ref{sec:thm1}--\ref{sec:thm6}.

\section{Preliminary definitions and notations}
\label{sec:prelim}

We review the definitions introduced in \cite{Jozsa}: the standard Pauli matrices are denoted by $I, X, Y, Z$, and an $n$-qubit \textit{Pauli operator} is defined to be any operator of the form $P= i^k P_1 \otimes \ldots \otimes P_n$, where $k=0,1,2,3$ and each $P_i$ is a Pauli matrix. The set of $n$-qubit Pauli operators forms a group $\mathcal P_n$, called the \textit{Pauli group}. The $n$-qubit \textit{Clifford group} $\mathcal C_n$ is defined to be the normalizer of the Pauli group $\mathcal P_n$ in the $n$-qubit unitary group $\mathcal U_n$, i.e.\@ $\mathcal C_n = \{U \in \mathcal U_n| U \mathcal P_n U^\dag = \mathcal P_n\}$. Elements of the Clifford group are called \textit{Clifford operations}. Clifford operations have an alternative characterization \cite{NielsenChuang}: an $n$-qubit operator $C$ is a Clifford operation if and only if it can be written as a circuit consisting of $O(n^2)$ gates from the following list: the Hadamard gate $H = 1/\sqrt 2(X+Z)$, the phase gate $S=\diag(1,i)$, and the CNOT gate $\CX_{ab}  = \ket 0\bra 0_a \otimes I_b + \ket 1\bra 1_a \otimes X_b$. Following the terminology in \cite{Jozsa}, we call these gates the \textit{basic Clifford gates}. A \textit{unitary Clifford circuit} is one that comprises only the basic Clifford gates. A \textit{Clifford circuit} is one that consists of not just the basic Clifford gates but also single-qubit intermediate measurement gates in the computational basis. 

We consider \textit{Clifford computational tasks} of the following form:
\begin{enumerate}
\item Start with an $n$-qubit pure input state $\ket{\psi_{\mathrm{in}}}$.
\item Apply to $\ket{\psi_{\mathrm{in}}}$ a Clifford circuit $B$, which may be expressed as:
\begin{eqnarray}
\label{eq:adaptiveCircuit}
B(x_1,\ldots, x_K)&=& C_K(x_1,\ldots, x_K) M_{i_K (x_1 ,\ldots, x_{K-1})} (x_K) \ldots \nonumber\\ && \qquad C_2(x_1,x_2) M_{i_2 (x_1)} (x_2) C_1 (x_1 ) M_{i_1} (x_1) C_0,
\end{eqnarray}
where each $C_i(x_1,\ldots, x_i)$ is a unitary Clifford circuit and $M_i(x)$ indicates a measurement on qubit line $i$ with measurement result $x$. In general, $B$ is taken to be an adaptive circuit, i.e.\@ the $i$th unitary Clifford circuit $C_i$ depends on previous measurement results $x_1,\ldots, x_i$. Let $N$ denote the total number of gates in $B$. Assume that there are no extraneous qubits, so that $n=O(N)$.
\item Measure all $n$ qubit lines using a projection-valued measure $\{|\beta_{y_1,\ldots, y_n}\rangle\langle \beta_{y_1,\ldots, y_n}|\}_{y_1,\ldots, y_n}$, with measurement outcome $y_1 \ldots y_n \in \{0,1\}^n$.
\end{enumerate}

In this work, we restrict our attention to product state inputs and product measurements (i.e. arbitrary single-qubit measurements), i.e. 
\begin{enumerate}
\item Inputs are $\ket{\psi_{\mathrm{in}}}=\ket{\alpha_1}\ket{\alpha_2}\ldots \ket{\alpha_n}$, where each $ \ket{\alpha_i} \in \mathbb C^2$.
\item Measurements directions are $\ket{\beta_{y_1,\ldots y_n}} = \ket{\beta_1^{y_1}}\ket{\beta_2^{y_2}}\ldots \ket{\beta_n^{y_n}}$, where each  $ \ket{\beta_i^{y_i}}\in \mathbb C^2$.
\end{enumerate}

Note that for each $i$, by completeness, $\ket{\beta_i^0}\bra{\beta_i^0} +\ket{\beta_i^1}\bra{\beta_i^1}=I $. Hence, we need to just specify $\{\ket{\beta_i^0}\bra{\beta_i^0}\}_i$ in order to completely specify the product measurement. A description of the Clifford computational task is thus given by the three-tuple
\begin{equation}
\label{CliffDesc}
T=(\ket{\alpha},B,\ket{\beta}),
\end{equation}
where $\ket\alpha=\ket{\alpha_1}\ket{\alpha_2}\ldots \ket{\alpha_n}$ is the initial state, $B$ is the description of the Clifford circuit, and $\ket{\beta}= \ket{\beta_1^0}\ket{\beta_2^0}\ldots \ket{\beta_n^0}$ are the measurement directions.

Now, each product state input can be seen as arising from applying a product unitary to the computational basis states, i.e.\@ there exist single-qubit unitary operators $V_1,\ldots, V_n$ such that $V_1\otimes\ldots\otimes V_n \ket {0\ldots 0} = \ket\alpha$. Likewise, every product measurement operator can be seen as arising from applying a product unitary operator followed by measuring in the computational basis. More precisely, a measurement in the direction  $\ket{\beta_1^{y_1}}\ket{\beta_2^{y_2}}\ldots \ket{\beta_n^{y_n}}$ is equivalent to the application of a unitary operator $U_1^\dag\otimes\ldots\otimes U_n^\dag$ followed by a measurement in the computational basis, where the $y_i$th (with zero indexing) column of $U_i$ is given by $U_i \ket{y_i} = \ket{\beta_i^{y_i}}$. 

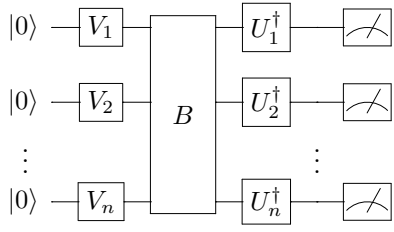
\begin{figure}
\begin{align}
\Qcircuit @C=1em @R=1em {
& &\lstick{\ket{0}} & \gate{V_1}  & \multigate{3}{ \ B \ } & \gate {U_1^\dag} & \qw & \meter \\ 
& &\lstick{\ket{0}} & \gate{V_2}  & \ghost{ \ B \ } & \gate {U_2^\dag} & \qw & \meter \\ 
& \vdots & & &  &   & \vdots  \\
& &\lstick{\ket{0}} & \gate{V_n}  & \ghost{ \ B \ } & \gate {U_n^\dag} & \qw & \meter \\ 
} \nonumber
\end{align}
\caption{\small Circuit diagram for the Clifford computational tasks considered in this paper. The gates $V_i$ and $U_i^\dag$ are arbitrary single qubit unitaries, $B$ is a Clifford circuit, the input state is the all-zero computational basis state, and the output measurement is performed in the computational basis. }
\label{CliffordWithUnitaries}
\end{figure}

Hence, the Clifford computational tasks we consider are of the structure shown in Figure \ref{CliffordWithUnitaries}. They may alternatively be represented by the 3-tuple
\begin{equation} 
\label{CliffDesc2}
T=(\{V_i\}_{i=1}^n,B,\{U_i\}_{i=1}^n).
\end{equation}
We will use the above two descriptions in Eqs.\,\eqref{CliffDesc} and \eqref{CliffDesc2} of Clifford tasks interchangeably, and even allow for mixed descriptions, for example, $T=(\ket{\alpha},B,\{U_i\}_{i=1}^n)$.

We'll now write down expressions for the probabilities of outcomes. For a computational task $T=(\ket{\alpha},B,\ket{\beta})$ and subset $I=\{i_1,\ldots,i_s\} \subseteq [n]$, let $P_T^I(y_{i_1},\ldots,y_{i_s})$ be the marginal probability that the outputs $y_{i_1},\ldots,y_{i_s}$ are obtained in the lines $i_1,\ldots,i_s$. Define $P_T(y_1,\ldots,y_n) = P_T^{[n]}(y_1,\ldots,y_n)$ to be the probability of the outcome $y_1 \ldots y_n$.

For the adaptive circuit described by Eq.\,\eqref{eq:adaptiveCircuit}, if the intermediate measurement results are $x_1\ldots x_K$, then the  density operator of the final state is given by $ B(x_1,\ldots, x_K) [\rho_\alpha]$, where $\rho_\alpha = \ket \alpha \bra \alpha$. We use the notation $C[\rho]$ to denote the state that is obtained when we apply $C$ to the density matrix $\rho$, i.e.\@ $C[\rho] = C\rho C^\dag$.
The probability that the result $x_1\ldots x_K$ occurs is given by $$p(x_1,\ldots,x_K) = p(x_K|x_1,\ldots,x_{K-1})  p(x_{K-1}|x_1,\ldots,x_{K-2})\ldots p(x_2|x_1)p(x_1),$$ where 
\begin{eqnarray}
p(x_j|x_1,\ldots, x_{j-1}) &=& \tr\{ \ket{x_j}\bra{x_j}_{i_j} C_{j-1}(x_1,\ldots, x_{j-1}) M_{i_{j-1} (x_1 ,\ldots, x_{j-2})} (x_{j-1}) \ldots  \nonumber\\ &&\times C_1 (x_1 ) M_{i_1} (x_1) C_0 [\rho_\alpha]\} .
\end{eqnarray}

The final output state is then given by
$$B[\rho_\alpha] = \sum_{x_1\ldots x_K} p(x_1,\ldots, x_K)B(x_1,\ldots, x_K)  [\rho_\alpha]. $$

Hence, the outcome probabilities are given by
$$p_T(y_1,\ldots, y_n) = \langle \beta_{y_1,\ldots, y_n}| B[\rho_\alpha]|\beta_{y_1,\ldots, y_n}\rangle,$$
and the marginal probabilities are given by
\begin{equation}
\label{marginals}
p_T^{I}(y_{i_1},\ldots, y_{i_s}) = \sum_{y_{k_1}\ldots y_{k_{n-s}}} p_T(y_1,\ldots, y_n),
\end{equation}
where $\{k_1,\ldots,k_{n-s}\} = [n]-I$.

We consider the following 3 binary choices of ingredients: 
\begin{enumerate}
\item Inputs: IN(BITS) vs IN(PROD)
\item Intermediate measurements: NONADAPT vs ADAPT
\item Outputs: OUT(BITS) vs OUT(PROD)
\end{enumerate}

The first two cases have been considered in \cite{Jozsa}: IN(BITS) and IN(PROD) refer to having computational basis inputs and product state inputs respectively, while NONADAPT and ADAPT refer to nonadaptive and adaptive measurements respectively. Note that in \cite{Jozsa}, all the output measurements are performed in the computational basis (call this case OUT(BITS)). In this paper, we study how the classical simulation complexity changes when we allow for more general measurements. For the sake of symmetry with the inputs, we introduce the new ingredient OUT(PROD), which refers to product measurements, i.e.\@ when the $U_i$'s in \eq{CliffDesc2} are unrestricted. Note that we allow product measurements only at the output -- intermediate measurements are always single-qubit measurements in the computational basis.

These 3 binary choices lead to $2^3=8$ different subsets of Clifford computational tasks. Let $\nu \in$ \{(IN(BITS), NONADAPT, OUT(BITS)), (IN(BITS), NONADAPT, OUT(PROD)), $\ldots$ \} be one of these 8 subsets. We shall denote the subset of Clifford computational tasks corresponding to $\nu$ by $\mathcal C_\nu$. Note that unlike \cite{Jozsa}, we do not include OUT(1) and OUT(MANY) as ingredients in our circuit. Instead, we assume without loss of generality that all $n$ qubit lines are measured. This is justified by the principle of implicit measurement, which states that any unterminated quantum wires at the end of the circuit can be assumed to be measured \cite{NielsenChuang}. The number of output lines we simulate will be specified by the notion of simulation instead. We discuss various notions of simulation in the next section.

\section{Notions of classical simulation}
\label{sec:Notions}

In \cite{Jozsa}, Jozsa and Van den Nest consider two notions of classical simulation, namely weak ($\weak$) and strong ($\strong$) simulation. 	A weak simulation involves providing a sample of the output distribution, whereas a strong simulation involves calculating the joint output probabilities as well as the marginal probabilities. Neither of these definitions places a restriction on the number of output registers to be simulated. To take this into account, we shall introduce finer-grained notions of simulation, namely $\strong(f(n))$ and $\weak(f(n))$ (short for strong-$f(n)$ and weak-$f(n)$) simulation.

Let $f(n)$ be either the constant function $f(n)=1$ or the identity function $f(n)=n$ (in this paper, we restrict our attention to these cases, though one might certainly consider other functions $f$, like $f(n) = \log (n)$).
\begin{definition} ($\strong(f(n)$))
A $\strong(f(n))$ simulation of a subset of Clifford computational tasks $\mathcal C_\nu$ is a deterministic classical algorithm that on input $\langle T, I, y\rangle$, where $T\in \mathcal C_\nu$ is a task on $n$ qubits, $I = \{i_1,\ldots, i_{f(n)}\} \subseteq [n]$ and $y_I = \{y_{i_1}, \ldots, y_{i_{f(n)}}\}$, outputs $p_T^{I} (y_{i_1},\ldots, y_{i_{f(n)}})$.
\end{definition}

\begin{definition} ($\weak(f(n)$))
A $\weak(f(n))$ simulation of a subset of Clifford computational tasks $\mathcal C_\nu$ is a randomized classical algorithm that on input $\langle T, I\rangle$, where $T\in \mathcal C_\nu$ is a task on $n$ qubits and $I = \{i_1,\ldots, i_{f(n)}\} \subseteq [n]$, outputs $y_{i_1},\ldots, y_{i_{f(n)}}$, with probability $p_T^{I} (y_{i_1},\ldots, y_{i_{f(n)}})$. 
\end{definition}

\tikzstyle{block} = [rectangle, 
    text width=5em, text centered, minimum height=3em]
\tikzstyle{line} = [draw, -latex']

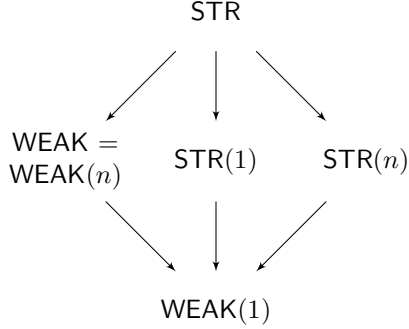
\begin{figure}
\begin{tikzpicture}[node distance = 2cm, auto]
    \node [block] (str) {$\strong$};
    \node [block, below of=str] (str1){$\strong(1)$};
    \node [block, left of=str1] (wk){$\weak = \weak(n)$};
    \node [block, right of=str1] (strn){$\strong(n)$};
    \node [block, below of=str1] (wk1){$\weak(1)$};
    
     \path [line] (str) -- (str1);
     \path [line] (str) -- (wk);
     \path [line] (str) -- (strn);
     \path [line] (str1) -- (wk1);
     \path [line] (wk) -- (wk1);
     \path [line] (strn) -- (wk1);

\end{tikzpicture}
\caption{Relationships between different notions of classical simulation of Clifford computational tasks. An arrow from $A$ to $B$ ($A \rightarrow B$) means that an efficient $A$-simulation of a computational task implies that there is an efficient $B$-simulation for the same task. The statement $\weak = \weak(n)$ is shorthand for $\weak \rightarrow \weak(n)$ and $\weak(n) \rightarrow \weak$.}
\label{fig:notionsOfxSimulation}
\end{figure}

$\strong$ and $\weak$ simulations are defined in exactly the same way, except that we place no restrictions on the size of the subset of output lines $|I|$ in the simulation. Note that this agrees with the definitions of strong and weak simulations in \cite{Jozsa}.

Let $\mathcal S \in \{ \strong(n),\strong(1), \strong, \weak(n), \weak(1), \weak \}$ be one of these 6 notions of simulation. We define an $\mathcal S$-simulation of a subset of Clifford computational tasks $\mathcal C_\nu$ to be \textit{efficient} if the simulation runs in $\poly(N)$-time, where $N$ is the number of gates in the $\mathcal C_\nu$-circuit. Let $\P\mathcal S$ be the set of all tasks $\mathcal C_\nu$ that have an efficient $\mathcal S$-simulation.

An immediate observation is that $\P\weak=\P\weak(n)$. The forward inclusion holds by definition, and the backward inclusion holds because we could sample from any subset $I$ by just sampling from all $n$ lines and ignoring the qubit lines that are not in $I$. From their definitions, we also immediately get the following inclusions: $\P\strong \subseteq \P\strong(1)$, $\P\strong \subseteq \P\strong(n)$ and $\P\weak \subseteq \P\weak(1)$. How does weak simulation compare with strong simulation? From Proposition 1 of \cite{Terhal}, it follows that $\P\strong \subseteq \P\weak$, $\P\strong(1) \subseteq \P\weak(1)$ and $\P\strong(n) \subseteq \P\weak(1)$. Note that the notions $\P\strong(n)$ and $\P\strong(1)$ are in general incomparable -- the forward inclusion ($\P\strong(n) \subseteq \P\strong(1)$) does not hold in general because computing a marginal distribution directly from the joint distribution involves summing an exponential number of terms and cannot be performed efficiently unless there is some structure in the problem. The backward inclusion ($\P\strong(n) \supseteq \P\strong(1)$) does not hold in general because knowing just the marginal distributions does not allow us to infer the joint distribution. We summarize the relationships between the different notions of simulation stated above in Figure \ref{fig:notionsOfxSimulation}.

\section{Results and discussion}
\label{sec:results}

\begin{table}
\begin{center}
  \renewcommand{\arraystretch}{1.4}
\begin{tabular}{|C{1.2cm}|C{1.2cm}|C{1.2cm}||C{1.6cm}|C{1.6cm}||C{1.6cm}|C{1.6cm}|C{1.6cm}|}
\hline
\multicolumn{3}{|c||}{}  &  \multicolumn{2}{c||}{Weak} & \multicolumn{3}{c|}{Strong} \\ \cline{4-8}
\multicolumn{3}{|c||}{} &  \weak(1)& \weak($n$) & \strong(1) & \strong($n$) & \strong \\ \hline
\multirow{6}{2cm}{ OUT\\(BITS) } & \multirow{2}{2cm}{\shortstack{ NON-\\ADAPT}} &  \shortstack{\\IN\\(BITS)} & \P \newline \footnotesize (i)  & \P \newline \footnotesize (ii) & \P \newline \footnotesize (iii) & \P \newline \footnotesize (iv) & \boxed \P  \newline \footnotesize (JV4) \\ \cline{3-8}
& &  IN \newline (PROD) & \P \newline \footnotesize (v) & \boxed\PH  \newline \footnotesize (JV7) & \P \newline \footnotesize (JV1) & \boxed{\#\P} \newline \footnotesize (Thm \ref{Thm1}) & \#\P \newline \footnotesize (JV6) \\ \cline{2-8} 
& \multirow{2}{2cm}{ADAPT} &  \shortstack{IN\\(BITS)} & \P \newline \footnotesize  (vi)  & \boxed\P \newline \footnotesize (JV5) & \boxed{\#\P} \newline \footnotesize (JV2) & \boxed{\#\P} \newline \footnotesize (Thm \ref{Thm2}) & \#\P \newline \footnotesize (vii) \\ \cline{3-8}
& &  IN \newline  (PROD) & \boxed{\QC} \newline \footnotesize (JV3) & \QC \newline \footnotesize (viii) & \#\P \newline \footnotesize (ix)  & \#\P \newline \footnotesize (x)  & \#\P \newline \footnotesize (xi)  \\ \hline \hline
\multirow{6}{2cm}{ OUT\\(PROD) } &
\multirow{2}{2cm}{\shortstack{ NON-\\ADAPT}} &  \shortstack{\\IN\\(BITS)} & \P \newline \footnotesize (xii)  & \boxed\PH \newline \footnotesize (Thm \ref{Thm3}) & \P \newline \footnotesize (xiii) & \boxed{\#\P} \newline \footnotesize (Thm \ref{Thm4}) & \#\P \newline \footnotesize (xiv)   \\  \cline{3-8}
& &  IN \newline (PROD) & \P \newline \footnotesize (xv)  & \PH \newline \footnotesize (xvi)  & \boxed\P \newline \footnotesize (Thm \ref{Thm5}) & \#\P \newline \footnotesize (xvii)  & \#\P \newline \footnotesize (xviii)   \\ \cline{2-8} 
& \multirow{2}{2cm}{ADAPT} &  \shortstack{IN\\(BITS)} & \boxed\P \newline \footnotesize (Thm \ref{Thm6}) & \PH \newline \footnotesize (xix) & \#\P \newline \footnotesize (xx) & \#\P \newline \footnotesize (xxi) & \#\P \newline \footnotesize (xxii) \\  \cline{3-8}
& &  IN \newline (PROD) & \QC \newline \footnotesize (xxiii) & \QC \newline \footnotesize (xxiv) & \#\P\newline \footnotesize (xxv) & \#\P\newline \footnotesize (xxvi) & \#\P\newline \footnotesize (xxvii)   \\ \hline
\end{tabular}
\end{center}
\caption{\small Classification of the classical simulation complexities of families of Clifford circuits with different ingredients. $\P$ stands for efficiently classically simulable. $\#\P$ stands for $\#\P$-hard. $\QC$ stands for $\QC$-hard and $\PH$ stands for ``if efficiently classically simulable, then the polynomial hierarchy collapses". The proofs of JV 1--7 can be found in \cite{Jozsa}. Theorems \ref{Thm1}--\ref{Thm6} are about cases not found in \cite{Jozsa} and are the main results of this paper. (i)--(xxvii) are results that follow immediately from these theorems by using the rules in Appendix \ref{sec:SimplifyingRules}. The 11 cases with boxed symbols are the core theorems, from which all other cases can be deduced using rules which we describe in Appendix \ref{sec:SimplifyingRules}. These include all the main theorems JV 1--7 and Theorems \ref{Thm1}--\ref{Thm6}, except JV1 and JV6, which turn out to be special cases of Theorem \ref{Thm5} and Theorem \ref{Thm1} respectively. }
\label{mainTable}
\end{table}

In Section \ref{sec:prelim}, we introduced 3 binary choices of ingredients. In Section \ref{sec:Notions}, we described 5 different notions of classical simulation (see Figure \ref{fig:notionsOfxSimulation}). This gives a total of $2^3 \times 5 = 40$ different cases, whose classical simulation complexities we classify in Table \ref{mainTable}. The entries of the table should be understood as follows: for a subset of computational tasks $\mathcal C_\nu$, and a notion of simulation $\mathcal S$,
\begin{itemize} 
\item $\P$ (classically efficiently simulable) means that  $\mathcal C_\nu \in \P\mathcal S$.
\item $\#\P$ (which stands for $\#\P$-hard) means that an efficient $\mathcal S$-simulation of $\mathcal C_\nu$ would give rise to an efficient algorithm for the $\#\P$-complete problems.
\item $\QC$ (which stands for quantum-computing universal) means that $\mathcal C_\nu$ is universal for quantum computation.
\item $\PH$ means that an efficient $\mathcal S$-simulation of $\mathcal C_\nu$ would imply a collapse of the polynomial hierarchy.
\end{itemize}

Our main results are Theorems \ref{Thm1}--\ref{Thm6}, whose proofs we present in Appendices \ref{sec:thm1}--\ref{sec:thm6}. Using the rules in Appendix \ref{sec:SimplifyingRules}, these theorems, together with the results\footnote{JV = Jozsa and Van den Nest \cite{Jozsa}} JV 1--7 from \cite{Jozsa}, give a complete classification of the classical simulation complexities of all the 40 cases. 

A few remarks are in order. First, we note that the entries in the last two columns of Table \ref{mainTable} are identical. This means that even though the notions $\strong(n)$ and $\strong(1)$ seem to be incomparable, the former is not easier to perform than the latter for the Clifford computational tasks considered in this paper.  We note that Theorem \ref{Thm1}, which generalizes (JV6), implies that being able to compute only the joint probabilities already suffices in enabling us to solve the $\#\P$-hard problems: we do not require the full power of strong simulation for that.

Second, we note the symmetry between inputs and outputs: for example, the 2nd and 5th rows of Table \ref{mainTable} are identical, i.e.\@ the simulation complexity is the same whether product unitaries are applied at the beginning or at the end of the circuit. In particular, for (JV7), the key to collapsing the polynomial hierarchy was that the magic state $\ket{\pi/4} = 1/\sqrt{2}(\ket 0 + e^{i\pi/4}\ket 1)$ together with postselection can simulate the $T=\diag(1,e^{i\pi/4})$ gate. For Theorem \ref{Thm3}, although we did not have magic state inputs at our disposal, we still managed to get a similar result to (JV7) by showing that the $T$ gate can be simulated by arbitrary single-qubit measurements with postselection.

Third, we note that Theorem \ref{Thm5} is a generalization of JV1. In fact, a stronger result can similarly be shown to be true: for any constant $b$, there exists an efficient $\strong(b)$-simulation of circuits belonging to OUT(PROD), NONADAPT, IN(PROD). In \cite{AaronsonGottesman}, Aaronson and Gottesman present algorithms for simulating two separate classes of extended Clifford circuits: circuits with non-stabilizer initial states, and circuits with non-stabilizer gates. A consequence of their results is that it is efficient to simulate (in the $\strong(b)$-sense) nonadaptive tasks with either of the following ingredients: 1. product state inputs with computational basis measurements (which is the content of JV1). 2. computational basis inputs with product measurements (which is the content of case xiii) -- since this is equivalent to applying $b$ single-qubit gates just before a computational basis measurement. Theorem \ref{Thm5} is slightly more general than either of these cases. Essentially, it combines the case involving product state inputs and the case involving product measurements and shows that the new task is still in $\P\strong(b)$. 

\section{Concluding remarks}

We have demonstrated how the classical simulation complexities of extended Clifford circuits change when various ingredients in the circuits are varied. It would be interesting to study other ingredients of Clifford circuits as well, e.g., mixed input states \cite{AaronsonGottesman}, states (as well as transformations and measurements) with positive Wigner representations \cite{Eisert, Veitch}, and non-commutative extensions like XS-stabilizer states \cite{Ni}. Most of these extensions have previously been considered separately, and it will be fruitful to study the classical simulation complexities of computational tasks with these different combinations of ingredients.

Our discussion of extended Clifford circuits has been restricted to involve only exact simulation. But any classical simulation device in the real world is subject to noise and decoherence. Taking this into consideration, it is important that we investigate the classical simulation complexities of the various subsets $\mathcal C_\nu$ under notions of approximate classical simulation. For example, consider the cases labeled in Table \ref{mainTable} by $\PH$. Under the plausible complexity assumption that the polynomial hierarchy does not collapse, are these extended Clifford circuits still hard to sample from if we required only that the circuits are weakly simulable under small variation distance? We leave open this question, though we note that some progress has been made for the case where the same question is asked of other restricted models of quantum computation. For example, Aaronson and Arkhipov have showed that for boson sampling, hardness of weak simulation under small variation distance holds if we assume that certain unproven conjectures, like the \textit{Permanent-of-Gaussians Conjecture}, and the \textit{Permanent Anti-Concentration Conjecture}, are true \cite{AaronsonArkhipov}. Similar results have been obtained for IQP circuits \cite{bremner2015average,bremner2016achieving, fujii2016noise}, and it will be interesting to prove similar hardness results for extended Clifford circuits.

\section*{Acknowledgements}
I thank Anand Natarajan for discussions on the proofs of Theorems \ref{Thm1} and \ref{Thm3}, and Adam Bouland for discussions on the proofs of Theorems \ref{Thm1} and \ref{Thm2}. I also thank Scott Aaronson for useful insights and for teaching the seminar class 6.S899 on Physics and Computation at MIT, which led to this research. The author is supported by the National Science Scholarship from the Agency for Science, Technology and Research (A*STAR).

\bibliographystyle{ieeetr}
\bibliography{Bib}

\appendix
\newpage

\section{Rules for proving results in Table \ref{mainTable}}
\label{sec:SimplifyingRules}

In this section, we show that the entries in Table \ref{mainTable} that contain boxed symbols (for example, $\boxed\PH$) can be used to deduce all the other entries in the table. Therefore, for a complete proof of the results in the table, it will suffice to prove just Theorems \ref{Thm1}--\ref{Thm6} as well as JV 1--6 (save JV1 and JV6). This is a straightforward consequence of a couple of simple rules (cf \cite{Jozsa}), which we state explicitly here:
\begin{itemize}
\item If the classical simulation of a set of computational tasks $\mathcal A$ is efficient, then the classical simulation of any subset of $\mathcal A$ would also be efficient.
\item If the classical simulation of a set of computational tasks $\mathcal A$ is hard ($\#\P$-hard, $\QC$-hard or $\PH$-collapsing in the sense described above), then the classical simulation of any superset of $\mathcal A$ would also be similarly hard.
\item The set of computational tasks with IN(BITS) is a subset of the same set of tasks with IN(PROD). Write this as IN(BITS) $\subset$ IN(PROD). Similarly, OUT(BITS) $\subset$ OUT(PROD), NONADAPT $\subset$ ADAPT.
\item If the strong simulation of a set of tasks is efficient, then so are the $\strong(1)$, $\strong(n)$ and $\weak(n)$ simulations of that set. If any of the latter three notions of simulation is efficient, then $\weak(1)$-simulation of that set is also efficient (as illustrated in Figure \ref{fig:notionsOfxSimulation}). In the opposite direction, if $\strong(1)$ or $\strong(n)$ simulation is $\#\P$-hard, then so is strong simulation. Similarly, if $\weak(1)$ simulation is $\QC$-hard, then $\weak(n)$-simulation is also $\QC$-hard. Note that $\#\P$-hardness holds only for strong notions of simulation, and $\QC$-hardness holds only for weak notions of simulation.
\end{itemize}

\section{Proof of Theorem 1}
\label{sec:thm1}

A 3-CNF formula $f$ (i.e.\@ a Boolean formula in conjunctive normal form \cite{Sipser}) with $n$ variables and $N$ clauses is of the form
\begin{equation}
\label{3cnf_formula}
f(x_1,\ldots, x_n) = (a_{11} \vee a_{12} \vee  a_{13}) \wedge (a_{21} \vee a_{22} \vee  a_{23}) \wedge \ldots \wedge (a_{N1} \vee a_{N2} \vee  a_{N3}),
\end{equation} 
where each $a_{ij} \in \{x_1 ,\ldots, x_n, \bar x_1, \ldots, \bar x_n\}$. We shall assume that every variable $x_1,\ldots, x_n$ appears in the formula for $f$, so that $n\leq 3N$, i.e.\@ $n=O(N)$.

We define $\AbsSAT$ to be the following problem: Given a 3-CNF formula $f:\uint^n \rar \uint$, compute
$$ S(f)= \left| \sum_{x\in\uint^n} (-1)^{f(x)}  \right|.$$ 

We shall denote $\#_i(f) = \left| \{ x|f(x)=i \}\right|$ for $i=0,1$. Then $S(f) = |\#_0(f)-\#_1(f)|$.

\begin{lemma}
$\AbsSAT$ is $\#\P$-hard.
\end{lemma}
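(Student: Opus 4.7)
The plan is to give a polynomial-time Turing reduction from the $\#\P$-complete problem $\#\text{3SAT}$---compute $\#_1(f)$ given a 3-CNF formula $f$---to $\AbsSAT$. The elementary identity
\[
S(f) \;=\; |\#_0(f) - \#_1(f)| \;=\; |2^n - 2\#_1(f)|
\]
only pins down $\#_1(f)$ to one of two values because of the absolute value, so a single raw $\AbsSAT$ query is not immediately enough. My strategy is to pad $f$ into a new 3-CNF formula $h$ in which $\#_0$ provably dominates $\#_1$ strictly, so that the absolute value becomes harmless.

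Concretely, given $f(x_1,\ldots,x_n)$ with $N$ clauses, I would introduce two fresh variables $y_1,y_2$ and define
\[
h(x_1,\ldots,x_n,y_1,y_2) \;=\; f(x_1,\ldots,x_n) \;\wedge\; (y_1 \vee y_1 \vee y_1) \;\wedge\; (y_2 \vee y_2 \vee y_2).
\]
This is a legitimate 3-CNF instance in the sense of \eqref{3cnf_formula}, because that definition does not require the three literals in a clause to be distinct. The two appended clauses act as unit clauses forcing $y_1=y_2=1$, yielding a bijection between satisfying assignments of $f$ and of $h$, and hence $\#_1(h) = \#_1(f)$. Since $h$ has $n+2$ variables and $\#_1(h) \leq 2^n < 2^{(n+2)-1}$, the asymmetry $\#_0(h) > \#_1(h)$ is strict, so
\[
S(h) \;=\; \#_0(h) - \#_1(h) \;=\; 2^{n+2} - 2\#_1(f),
\]
and hence $\#_1(f) = \bigl(2^{n+2} - S(h)\bigr)/2$ is recovered from one oracle answer in polynomial time.

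The construction adds only two variables and two clauses, so it is polynomial in the size of $f$ and also preserves the normalization $n=O(N)$ assumed earlier. Given that $\#\text{3SAT}$ is $\#\P$-complete by Valiant's theorem (together with the standard parsimonious reduction from $\#\text{SAT}$ to $\#\text{3SAT}$), this establishes $\#\P$-hardness of $\AbsSAT$. The only nontrivial content is the padding trick, whose role is precisely to eliminate the absolute value; without it one is stuck choosing between two candidate values for $\#_1(f)$, so I expect that step to be the main---and only---obstacle.
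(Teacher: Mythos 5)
Your proof is correct, and it rests on the same key idea as the paper's: pad the formula with dummy variables so the sign of $\#_0 - \#_1$ becomes known in advance, rendering the absolute value harmless and allowing a single $\AbsSAT$ query to pin down the count exactly. The execution, however, is the mirror image of the paper's. The paper takes a $\#\SAT$-instance $\phi$ and forms $\tilde\phi = \phi \vee y$, which overcounts so aggressively that $\#_1(\tilde\phi) \geq \#_0(\tilde\phi)$ and $S(\tilde\phi) = 2\#_1(\phi)$; you instead conjoin unit clauses $(y_i \vee y_i \vee y_i)$, which preserves $\#_1$ while inflating the ambient space so that $\#_0(h) > \#_1(h)$ and $S(h) = 2^{n+2} - 2\#_1(f)$. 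Your version has a real advantage: $h$ is manifestly a 3-CNF formula in the sense of Eq.~\eqref{3cnf_formula} (repeated literals are permitted), so it is a legitimate $\AbsSAT$ instance without any further conversion. By contrast, $\phi \vee y$ is not in 3-CNF even when $\phi$ is---distributing $y$ over the clauses yields a 4-CNF---so the paper's construction implicitly relies on a further normalization step that it does not spell out. One small economy you could note: a single fresh variable already suffices, since with only $y_1$ one still has $\#_0(h) \geq \#_1(h)$ (with equality only when $\#_1(f) = 2^n$, which is harmless), but using two makes the strict inequality immediate and costs nothing.
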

\begin{proof}
We shall construct a reduction from the $\#\P$-complete problem $\#\SAT$ to $\AbsSAT$. Given a  $\#\SAT$-instance $\phi(x_1,\ldots, x_n)$, introduce a new variable $y$ and define the Boolean formula $$\tilde\phi(x_1,\ldots, x_n,y)=\phi(x_1,\ldots, x_n)\vee y.$$ Let $A(\varphi)$ denote the set of satisfying assignments to a Boolean formula $\varphi$. Then $$A(\tilde \phi) = \{(x_1,\ldots,x_n,0)|(x_1,\ldots,x_n)\in A(\phi)\} \cup \{(x_1,\ldots,x_n,1)|(x_1,\ldots,x_n)\in\uint^n\}.$$ 
Hence, $\#_1(\tilde\phi)=\#_1(\phi)+2^n$, and $\#_0(\tilde\phi)=2^{n+1}-\#_1(\tilde \phi) = 2^n- \#_1(\phi)$.
This gives
$$S(\tilde \phi) = |\#_0(\tilde\phi)-\#_1(\tilde\phi)| =|2^n -\#_1(\phi) - \#_1(\phi)-2^n | = 2\#_1(\phi).$$
Solving the $\AbsSAT$ instance $\tilde\phi(x_1,\ldots, x_n,y)$ gives $S(\tilde \phi)$, from which $\#_1(\phi)$ can be found. Therefore, $\AbsSAT$ is $\#\P$-hard.

\end{proof}

\begin{theorem}
\label{Thm1}
Let $\nu =$ (IN(PROD), NONADAPT, OUT(BITS)). Then the $\strong(n)$-simulation of $\mathcal C_\nu$ is $\#\P$-hard.
\end{theorem}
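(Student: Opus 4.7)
The plan is to reduce $\AbsSAT$ (which is $\#\P$-hard by the preceding lemma) to $\strong(n)$-simulation of $\mathcal C_\nu$. Given a 3-CNF $f$ on $n$ variables and $N$ clauses, I will build a task $T_f\in\mathcal C_\nu$ on $n+m+k$ qubits (with $n,m,k=O(N)$) and distinguish an output string $\vec y$ for which $p_{T_f}(\vec y)=S(f)^2/2^{2n+k}$. Since $S(f)$ is a nonnegative integer and $n,k$ are known, a $\poly(N)$-time $\strong(n)$-simulator would let us recover $S(f)=2^{n+k/2}\sqrt{p_{T_f}(\vec y)}$, forcing $\AbsSAT\in\P$. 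The underlying template is the standard phase-oracle + Hadamard sandwich: if $U_f|x\rangle|0^m\rangle=(-1)^{f(x)}|x\rangle|0^m\rangle$, then the amplitude of $|0^n\rangle|0^m\rangle$ after $H^{\otimes n}U_fH^{\otimes n}$ equals $2^{-n}\sum_x(-1)^{f(x)}$. Since $U_f$ is not Clifford, I will realise it inside the Clifford world using product magic-state inputs and postselected gate teleportation, exploiting the fact that $\strong(n)$-simulation returns the joint probability of every qubit outcome -- including those of the teleportation ancillas.

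Concretely, first I assemble a reversible Clifford$+T$ circuit $V_f$ on $n+m$ wires realising the phase oracle by the usual compute--phase--uncompute: evaluate each 3-literal clause into a fresh ancilla with $O(1)$ Toffolis, combine them into one answer bit, apply $Z$, and uncompute, restoring all $m=O(N)$ ancillas to $|0\rangle$. Using the 7-$T$ Toffoli decomposition, $V_f$ has $T$-count $k=O(N)$. Next I replace every $T$-gate in $V_f$ by the magic-state gadget: introduce a fresh qubit prepared in $|A\rangle:=T|+\rangle=\tfrac{1}{\sqrt2}(|0\rangle+e^{i\pi/4}|1\rangle)$, apply a $\CX$ from the data qubit into this ancilla, and leave the ancilla untouched thereafter. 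A short calculation on $|\psi\rangle\otimes|A\rangle$ under the $\CX$ shows the post-$\CX$ state is $\tfrac{1}{\sqrt2}\bigl(T|\psi\rangle\otimes|0\rangle+XST|\psi\rangle\otimes|1\rangle\bigr)$, so on the branch where the ancilla is later read as $|0\rangle$ the data qubit has been acted on by exactly $T$ with no stray phase; the $S$ correction for the other branch is deliberately omitted. The full task $T_f$ is then: input $|0\rangle^{\otimes n}\otimes|0\rangle^{\otimes m}\otimes|A\rangle^{\otimes k}$ (a product state, so IN(PROD)); apply $H^{\otimes n}$ on the $X$-register, then the Clifford circuit $V'_f$ obtained from $V_f$ by the gadget substitution, then $H^{\otimes n}$ on the $X$-register again (all Clifford, with no adaptive control on measurement outcomes, so NONADAPT); and finally measure all $n+m+k$ qubits in the computational basis (OUT(BITS)).

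Setting $\vec y=(0^n,0^m,0^k)$, the amplitude of $\vec y$ factorises across the $k$ independent teleportation ancillas (each contributing $\langle0|A\rangle=1/\sqrt2$ on its good branch, by the calculation above) and the data registers, giving
\[\langle\vec y|T_f\text{-circuit}|{\rm in}\rangle=\frac{1}{2^{k/2}}\cdot\langle0^n,0^m|H^{\otimes n}U_fH^{\otimes n}|0^n,0^m\rangle=\frac{1}{2^{k/2}}\cdot\frac{1}{2^n}\sum_{x\in\{0,1\}^n}(-1)^{f(x)}.\]
Squaring yields $p_{T_f}(\vec y)=S(f)^2/2^{2n+k}$, and the reduction is complete. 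The main conceptual obstacle is the nonadaptivity constraint: ordinary magic-state injection needs an adaptive $S$-correction, which we cannot perform. The fix is to notice that $\strong(n)$-simulation is essentially a postselection oracle in disguise, because fixing the value of the ancilla bits in the requested joint outcome $\vec y$ is equivalent to conditioning on the success branch of every teleportation. The only technical point to verify carefully is that the $k$ teleportations compose cleanly so that the good-branch prefactor is exactly $2^{-k/2}$ with no cross-terms -- this follows because each teleportation ancilla interacts with the data via a single $\CX$ and is otherwise disentangled, so the Schmidt decomposition across ancilla vs.\ data factorises teleportation by teleportation.
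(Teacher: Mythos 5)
Your proof is correct and follows essentially the same approach as the paper's: both implement a Hadamard-sandwiched phase oracle for $f$ via a Clifford$+T$ circuit, replace each $T$ gate by the $\ket{\pi/4}$-ancilla/CNOT gadget, and extract $\left|\sum_x(-1)^{f(x)}\right|$ from the joint probability of the all-zero outcome on all lines, thereby solving $\AbsSAT$. The only differences are cosmetic---you realize the phase via a direct $Z$ on the answer bit whereas the paper XORs $f(x)$ into a $\ket{-}$ target qubit (phase kickback), and your failure-branch state for the gadget should read (up to global phase) $T^\dagger\ket\psi=S^\dagger T\ket\psi$ rather than $XST\ket\psi$, though only the success branch is used so the argument is unaffected.
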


\begin{proof}

Assume that there exists an efficient $\strong(n)$-simulation $S$ of $\mathcal C_\nu$. We'll use $S$ to construct an efficient algorithm for $\AbsSAT$: On input $f:\uint^n\rar\uint$, given as a 3-CNF formula with $N$ clauses, where $n=O(N)$, construct a quantum circuit $Q_f$, consisting of only the basic Clifford gates and $T$ gates, that acts on the following computational basis states as follows: (See Lemma \ref{3CNFCliffordT} for the details of such a construction)
$$Q_f \ket{x_1,\ldots, x_n, 0} | \vec 0 \rangle_A = \ket{ x_1,\ldots, x_n, f(x_1,\ldots,x_n)} | \vec 0 \rangle_A.$$

Let $K$ be the number of $T$ gates in $Q_f$. For the $j$th $T$ gate (acting on the $l_j$th line), for $j=1,\ldots,K$, introduce an ancilla line $a_j$, and replace the $T$ gate with the CNOT gate $CX_{l_j,a_j}$. Call the resulting circuit $A_f$. It is straightforward to check that if each ancilla wire is initialized to the state $\ket{\pi/4}$, and measured at the end of the computation, and if the measurement outcomes are $0\ldots 0$, then the non-ancilla registers of $A_f$ would implement $Q_{ f}$.
Hence, ignoring the ancilla registers, for the above measurement outcomes, we have
$A_f : |x_1,\ldots, x_n,y\rangle|\vec 0\rangle_A  \mapsto |x_1,\ldots, x_n,y\oplus f(x_1,\ldots, x_n)\rangle|\vec 0\rangle_A$.

Let $M_{f}$ be the following circuit: \def\Af{A_f} 

\begin{align}
\Qcircuit @C=1em @R=1em {
& & & &\lstick{\ket{\pi/4}}      & \qw      & \qw & \multigate{7}{\Af} & \qw & \qw & \meter & y_{a_1} \\
& & & &                    \vdots              &    &      &  &                 &  & \vdots &  \\
& & & &\lstick{\ket{\pi/4}}      & \qw      & \qw & \ghost{\Af} & \qw & \qw & \meter & y_{a_K}\\
& & & &\lstick{\ket{0}}      & \qw      & \gate H & \ghost{\Af} & \gate H & \qw & \meter & y_{1} \\
& & & &                   \vdots               &     &       & \ &                &  &\vdots &  \\
& & & &\lstick{\ket{0}}      & \qw      & \gate H & \ghost{\Af} & \gate H & \qw & \meter & y_{n}\\
& & & &\lstick{\ket{0}}      & \gate X      & \gate H & \ghost{\Af} & \gate H & \gate X & \meter &  \quad y_{n+1}\\
& & & &\lstick{|\vec 0\rangle_A}      &  \qw     & \qw & \ghost{\Af} &\qw  & \qw & \meter &  \quad \vec y_{A}
}
\nonumber
\end{align}

If we postselect on the outcomes  $y_{a_1}\ldots y_{a_K}= 0\ldots 0$ for the ancilla registers, the nonancilla registers evolve as follows:

\begin{eqnarray*}
|0\ldots 0 ,0\rangle \anc 0 &\rar & |0\ldots 0 ,1\rangle \anc 0\\
&\rar &   \f 1{\sqrt{2^{n+1}}} \sum_x \ket x (\ket 0 -\ket 1) \anc 0\\
&\rar &   \f 1{\sqrt{2^{n+1}}} \sum_x \ket x (\ket {f(x)} -\ket {1\oplus f(x)}) \anc 0\\
&&=  \f 1{\sqrt{2^{n+1}}} \sum_x  (-1)^{f(x)}\ket x (\ket 0 -\ket 1) \anc 0\\
&\rar &  \f 1{2^n} \sum_{xy}  (-1)^{f(x)+x\cdot y}\ket y \ket 1\anc 0 \\
&\rar &  \f 1{2^n} \sum_{xy}  (-1)^{f(x)+x\cdot y}\ket y \ket 0\anc 0.
\end{eqnarray*}

Hence, the conditional probability of obtaining the all-zero string given that the ancilla measurements also reveal the all-zero string is
$$ \Pr(0_1 \ldots 0_{n+1}, \vec 0_A| 0_{a_1} \ldots 0_{a_K}) = \left| \f 1{2^n} \sum_x (-1)^{f(x)} \right|^2. $$

But the LHS of the above expression is equal to
$$ \Pr(0_1 \ldots 0_{n+1}, \vec 0_A | 0_{a_1} \ldots 0_{a_K}) =\f{ \Pr(0_{a_1} \ldots 0_{a_K}, 0_1 \ldots 0_{n+1}, \vec 0_A) }{\Pr(0_{a_1} \ldots 0_{a_K}) }.$$
Now, $\Pr(0_{a_1} \ldots 0_{a_K})= 1/2^K$, since each ancilla bit has a probability of $1/2$ of being measured zero. 

Simplifying the above expressions, we get
$$ \left| \sum_x (-1)^{f(x)} \right| = 2^{n+K/2} \sqrt{\Pr(0_{a_1} \ldots 0_{a_K}, 0_1 \ldots 0_{n+1}, \vec 0_A)}.$$ 

But $\Pr(0_{a_1} \ldots 0_{a_K},0_1 \ldots 0_{n+1}, \vec 0_A)$ is a joint outcome probability, and hence can be obtained by running $S$ on $\langle M_{f}, 00\ldots 0\rangle$. (The input to $S$ is valid since $M_{ f}$ is a nonadaptive Clifford circuit with product state inputs.) Hence, the procedure given is an efficient algorithm for $\AbsSAT$. Since $\AbsSAT$ is $\#\P$-hard, this implies that $\mathcal C_\nu$ is $\#$P-hard as well.
\end{proof}

\section{Proof of Theorem 2}
\label{sec:thm2}

\begin{theorem}
\label{Thm2}
Let $\nu =$ (IN(BITS), ADAPT, OUT(BITS)). Then the $\strong(n)$-simulation of $\mathcal C_\nu$ is $\#\P$-hard.
\end{theorem}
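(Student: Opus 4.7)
The plan is to reduce $\#\SAT$ directly to the $\strong(n)$-simulation problem for $\mathcal C_\nu$ by building an adaptive Clifford circuit whose classical feedback loop evaluates a 3-CNF formula. Unlike Theorem~\ref{Thm1}, we have neither magic-state inputs nor any other non-stabilizer resource, so the interference-based $T$-gate injection used there cannot be applied. Instead, we exploit the fact that in Eq.~\eqref{eq:adaptiveCircuit} the Clifford block $C_j(x_1,\ldots,x_j)$ may depend on the prior intermediate outcomes through any polynomial-time classical function, which is the only resource we will need.

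Given a 3-CNF formula $f$ on $n$ variables (with $N=O(n)$ clauses), I would construct an $(n+1)$-qubit circuit $M_f$ with input $|0^{n+1}\rangle$ as follows. First, apply $H^{\otimes n}$ to the first $n$ qubits to produce $\frac{1}{\sqrt{2^n}}\sum_{x\in\{0,1\}^n}|x\rangle|0\rangle$. Next, intermediately measure each of these $n$ qubits in the computational basis, yielding outcomes $x=(x_1,\ldots,x_n)$ uniformly distributed over $\{0,1\}^n$. Then, using the adaptive classical feedback, (i) compute $f(x)$ in polynomial time and apply $X^{f(x)}$ to the $(n+1)$-th qubit, and (ii) apply $X^{x_i}$ to the $i$-th qubit for each $i\in[n]$, which deterministically resets the first $n$ qubits to $|0\rangle$. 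The state just before the final product measurement is therefore $|0^n\rangle|f(x)\rangle$, conditional on intermediate outcomes $x$.

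Taking the final measurement in the computational basis and summing over the uniformly distributed intermediate branches, the full joint output probability is
\begin{equation*}
p_{M_f}(0^n,1)\;=\;\sum_{x\in\{0,1\}^n}\frac{1}{2^n}\,[f(x)=1]\;=\;\frac{\#_1(f)}{2^n}.
\end{equation*}
An efficient $\strong(n)$-simulator of $\mathcal C_\nu$ called on $\langle M_f,[n+1],0^n 1\rangle$ would therefore return $\#_1(f)/2^n$ in polynomial time, solving $\#\SAT$; since $\#\SAT$ is $\#\P$-complete, the claimed $\#\P$-hardness follows.

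The only thing to verify carefully is that the construction fits the adaptive Clifford formalism of Eq.~\eqref{eq:adaptiveCircuit} and has polynomial size: each intermediate measurement is a single-qubit $Z$-basis measurement, each intervening Clifford block depends on prior outcomes through a polynomial-time computable function (to evaluate $f$ and to select the conditional $X$ gates), and the total gate count is $O(n+N)=O(N)$. I do not foresee any substantive technical obstacle; compared with Theorem~\ref{Thm1} the argument is considerably simpler, since classical adaptivity alone, with no quantum interference or magic-state injection, suffices to embed a $\#\P$-hard counting problem into a single joint output probability.
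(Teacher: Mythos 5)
Your approach arrives at the same joint-probability identity as the paper, $p(0^n,1) = \#_1(f)/2^n$, and the high-level idea (uniformly sample $x$ via Hadamard-measure, compute $f(x)$, reset, and read the target bit) is the same as the paper's $G_f$ circuit. However, there is a real gap in how you evaluate $f$ inside the adaptive Clifford model. You invoke, as a black box, the ability to condition a single Clifford gate $X^{f(x)}$ on an \emph{arbitrary polynomial-time function} of all $n$ prior intermediate outcomes, justified by the phrase ``the Clifford block $C_j(x_1,\ldots,x_j)$ may depend on prior outcomes through any polynomial-time classical function.'' The paper never commits to that strong reading of Eq.~\eqref{eq:adaptiveCircuit}, and it is not the conservative convention in the stabilizer literature, where adaptive corrections are typically conditioned on individual (or XORs of) measurement outcomes and the circuit description must list a fixed sequence of gates with their control bits. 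Under the conservative reading, a single gate conditioned on a full 3-CNF evaluation of $n$ prior outcomes is not a legal adaptive Clifford primitive, and your circuit does not literally belong to $\mathcal C_\nu$.

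The paper handles exactly this point by first building a classical Toffoli circuit $C_f$ for $f$ (Lemma~\ref{3CNFToffoli}), then replacing each Toffoli $T_{abc}$ by $(CX_{bc})^{x}M_a(x)$ -- measure a basis-state control, then apply a CNOT conditioned on that \emph{single} outcome. This realizes the evaluation of $f$ entirely with single-bit-conditioned Clifford gates and $O(N)$ ancillas, yielding the circuits $A'_f$ and $G_f$ with a manifestly polynomial-size description in the conservative adaptivity model. To close the gap in your argument you should either (i) justify the stronger adaptivity assumption by appeal to the precise definition used by Jozsa and Van den Nest, or (ii) replace the one-shot ``compute $f(x)$, apply $X^{f(x)}$'' step with a gate-level decomposition along the lines of the paper's Toffoli-to-measurement-controlled-CNOT translation. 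Once (ii) is carried out your construction becomes essentially the paper's $G_f$, so the underlying route is the same; the issue is only that your write-up treats as a primitive something the paper takes care to build.
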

\begin{proof}
Assume that there exists an efficient $\strong(n)$-simulation $S$ of $\mathcal C_\nu$. We'll use $S$ to construct an efficient algorithm $M$ for $\# \SAT$, i.e.\@ given as input a 3-CNF formula $f:\{0,1\}^n \rightarrow \{0,1\}$, our goal is to find $\# f = \sum_x f(x)$.

M = ``On input $f:\{0,1\}^n \rightarrow \{0,1\}$, given as a 3-CNF formula,
\begin{enumerate}
\item[1.] Construct a classical circuit $C_f$ consisting of only Toffoli gates that acts on the following computational basis states as follows: (see Lemma \ref{3CNFToffoli} for the details of this construction)
$$C_f (x_1,\ldots, x_n, 1, \vec 1_A) = (x_1,\ldots, x_n, f(x_1,\ldots,x_n), \vec 1_A).$$
\item[2.] Simulate $C_{f}$ with a Clifford circuit from $\mathcal A$: replace each Toffoli gate $T_{abc}(x,y,z) = (x,y,z\oplus xy)$ acting on lines $a,b,c$ with $(CX_{bc})^x M_a(x)$. Call the resulting quantum circuit $A_f$. The circuit $A_f$ acts on computational basis states as follows:
$$ A_f \ket{x_1,\ldots,x_n,1}|\vec 1 \rangle_A \rightarrow \ket{x_1,\ldots,x_n,f(x_1,\ldots,x_n)} |\vec 1 \rangle_A. $$
By applying $X$ gates (expressed as $X = HS^2H$) to the appropriate lines at the input and output of $A_f$, let $A'_f$ be the circuit that acts on computational basis states as follows:
$$ A'_f \ket{x_1,\ldots,x_n,0}|\vec 0 \rangle_A \rightarrow \ket{x_1,\ldots,x_n,f(x_1,\ldots,x_n)} |\vec 0 \rangle_A. $$
\item[3.] Let $G_f$ be the following circuit:
\begin{align}
\Qcircuit @C=1em @R=1em {
& &\lstick{\ket{0}} & \gate H & \meter & \multigate{6}{A'_f}_{z_1} & \gate {X^{z_1}} & \qw & \meter \\ 
& &\lstick{\ket{0}} & \gate H & \meter & \ghost{A'_f}_{z_2} & \gate {X^{z_2}} & \qw  & \meter\\
& &\vdots & & &  &  & \vdots \\
& & & & &  &  &  \\
& &\lstick{\ket{0}} & \gate H & \meter & \ghost{A'_f}_{z_n} & \gate {X^{z_n}} & \qw  & \meter\\
& &\lstick{\ket{0}} & \qw & \qw &\ghost{A'_f} & \qw & \qw  & \meter \\
& &\lstick{\anc 0} & \qw & \qw &\ghost{A'_f} & \qw & \qw  & \meter
\nonumber
}
\end{align}

\item[4.] Feed $\langle G_f, 00\ldots 01 \vec 0_A\rangle$ into $S$ to find $p=p(00\ldots 01\vec 0_A)$, the probability that the output is $00\ldots 01\vec 0_A$.
\item[5.] Output $\# f = 2^n p$."
\end{enumerate}

A straightforward calculation shows that the output of $G_f$ on input $|00\ldots 0\rangle \anc 0$  is $|0,\ldots, 0, f(z)\rangle \anc 0$ if the intermediate measurement results are $z= z_1\ldots z_n$
Hence, 
\begin{eqnarray}
	p = p(0\ldots 0,1,\vec 0_A) &=& \sum_z p(0\ldots 0,1,\vec 0_A|z_1 \ldots z_n) p(z_1 \ldots z_n) \nonumber \\
	&=& \sum_z |\langle 0\ldots 0 1,\vec 0_A|0\ldots 0, f(z), \vec 0_A \rangle|^2 \frac{1}{2^n} \nonumber\\
	&=& \frac{1}{2^n} \sum_x f(x).
\end{eqnarray}
Hence, the output of $M$ is $ 2^n p = \# f$.

\end{proof}

\section{Proof of Theorem 3}
\label{sec:thm3}

We follow a proof similar to that given in \cite{Bremner} that shows that if IQP circuits can be efficiently classically simuated in the weak sense, then the polynomial hierarchy collapses. Recall that the $T$ gate is given by $T = \diag(1,e^{i\pi/4})$. We first consider the following gadget $\mathcal G$:
\begin{align} \label{eq:gadgetG}
\Qcircuit @C=1em @R=1em {
& & & & & & & & & & & \\
& \qw & \qw & \qw & \qw & \ctrl{1} &  \qw & \qw & \qw & \qw & \qw &\qw \gategroup{1}{3}{4}{10}{.7em}{--} \\
& & & &\lstick{\ket{0}} & \targ & \gate{T} & \gate{H} & \meter & \ghost{1}_x & &   \\
& & & & & & & & & & &
}
\end{align}

\begin{lemma}
\label{Tgadget}
$$\mathcal G : \ket \psi \mapsto \begin{cases} T\ket \psi & \textrm{if } x=0 \\ ZT\ket \psi & \textrm{if } x=1 \end{cases}$$
\end{lemma}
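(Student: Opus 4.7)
The plan is to verify the claim by a direct computation, tracking the two-qubit state through each gate of the gadget $\mathcal{G}$ and then conditioning on the ancilla measurement outcome. Writing a generic input as $\ket\psi = a\ket 0 + b\ket 1$, I first append the $\ket 0$ ancilla to get $a\ket{00}+b\ket{10}$, then apply the CNOT (control on the data line, target on the ancilla) to obtain $a\ket{00}+b\ket{11}$. This is the one step that entangles the ancilla with the data and is the reason the gadget behaves as a controlled operation.

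Next I would apply $T$ to the ancilla, giving $a\ket{00}+b e^{i\pi/4}\ket{11}$, and then $H$ to the ancilla. Regrouping the resulting state by the ancilla basis yields
\begin{equation*}
\frac{1}{\sqrt{2}}\bigl(a\ket 0 + b e^{i\pi/4}\ket 1\bigr)\ket 0 \;+\; \frac{1}{\sqrt{2}}\bigl(a\ket 0 - b e^{i\pi/4}\ket 1\bigr)\ket 1.
\end{equation*}
The data-register coefficient of $\ket{0}_{\text{anc}}$ is manifestly $T\ket\psi$, and the coefficient of $\ket{1}_{\text{anc}}$ is $ZT\ket\psi$, because $Z$ flips the sign of the $\ket 1$ component and $T\ket\psi = a\ket 0 + b e^{i\pi/4}\ket 1$.

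Finally, I would invoke the Born rule: conditioned on the outcome $x=0$ the (unnormalized) data state is $T\ket\psi$, and conditioned on $x=1$ it is $ZT\ket\psi$, giving the two cases stated in the lemma. There is no real obstacle here; the only thing to be careful about is the bookkeeping of the phase $e^{i\pi/4}$ through the Hadamard so that the $Z$ correction is correctly identified on the $x=1$ branch. I would present the argument in three short displayed equations (post-CNOT, post-$T$, post-$H$) and then read off the two conditional states, which makes the correspondence with the $T$ gate (up to a $Z$ byproduct) transparent.
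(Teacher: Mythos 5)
Your proof is correct and is essentially the paper's own argument (the paper just states the post-unitary state $\tfrac{1}{\sqrt 2}[(T\ket\psi)\ket 0 + (ZT\ket\psi)\ket 1]$ directly, while you show the three intermediate steps that lead there). Nothing is missing; the bookkeeping of the $e^{i\pi/4}$ phase through the Hadamard and the identification of the sign flip with a $Z$ byproduct are exactly right.
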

\begin{proof}
Applying the unitary gates in the circuit to the state $\ket \psi \ket 0$ gives $\frac 1{\sqrt 2} \left[ (T\ket \psi)\ket 0 + (ZT\ket \psi)\ket 1 \right]$. Hence, we get the desired states when the ancilla wire is measured.
\end{proof}

From the proof of Lemma \ref{Tgadget}, we note that the measurement outcomes $x=0,1$ occur with an equal probability. Note that if $x=0$, then $\mathcal G$ would have implemented the $T$ gate.

\begin{lemma}
\label{lemma:simulateQ}
Let $Q$ be an arbitrary quantum circuit comprising the basic Clifford gates and $T$ gates. Let $\nu = $ (IN(BITS), NONADAPT, OUT(PROD)). Then $Q$ with postselection can be weakly simulated by $\mathcal C_\nu$ with postselection.
\end{lemma}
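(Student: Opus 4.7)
The plan is to reduce an arbitrary Clifford$+T$ circuit $Q$ on a computational-basis input to a $\mathcal{C}_\nu$ task, by replacing each $T$ gate with the gadget $\mathcal G$ of Lemma~\ref{Tgadget} and deferring every gadget operation to the final output measurement, where the non-Clifford parts can be absorbed into the choice of product measurement basis.

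First, for each of the $K$ occurrences of a $T$ gate in $Q$, I would introduce a fresh ancilla initialized to $|0\rangle$ and replace that $T$ gate by the gadget $\mathcal G$. By Lemma~\ref{Tgadget}, conditioning on the gadget outcome $x=0$ causes $\mathcal G$ to implement $T$ exactly, so postselecting on every gadget outcome being $0$ (in addition to whatever postselection $Q$ already required) yields a circuit whose postselected data output is distributed identically to that of $Q$ with postselection.

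Second, I would observe that within each gadget the non-Clifford portion consists of a $T$ and an $H$ applied to the gadget ancilla immediately before a computational-basis measurement of that ancilla, and no later gate in the whole construction ever touches that ancilla. Hence this block is equivalent to a single-qubit measurement of the ancilla in the basis $\{(HT)^{\dagger}|0\rangle,(HT)^{\dagger}|1\rangle\}$, which can be performed at any later time. Using the principle of deferred measurement, I would push every such gadget measurement to the end of the circuit; this is legitimate because the postselection on outcome $0$ means the intermediate outcome is never used to control a subsequent gate. What remains is a Clifford unitary — the original gates of $Q$ together with one CNOT per gadget — acting on a computational-basis input (the input of $Q$ together with the $K$ ancilla $|0\rangle$'s), followed by a single layer of product measurements: each data qubit in the computational basis and each gadget ancilla in the rotated basis above. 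This is exactly a $\mathcal{C}_\nu$ task with polynomial overhead in $K$, and postselecting on the $K$ ancilla outcomes plus $Q$'s original postselection reproduces the required output distribution, yielding the desired weak simulation of $Q$ with postselection.

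The main point requiring care is the bookkeeping of the deferral and joint postselection. I need to verify that no gate of $Q$ ever acts on a gadget ancilla after its CNOT (so the ancilla is really free to be measured at the end) and that postselecting jointly on all deferred gadget outcomes at the output layer produces the same conditional distribution on the data qubits as would postselecting gadget-by-gadget inline. Both are intuitively clear, since each gadget ancilla interacts with the rest of the circuit only through its single CNOT, but writing this out so the equivalence of the postselected distributions is manifest is the one nontrivial piece.
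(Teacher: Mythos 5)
Your proposal matches the paper's proof: the paper also replaces each $T$ gate by the gadget $\mathcal{G}$, observes that the $HT$ gates on each fresh ancilla together with the computational-basis measurement constitute a product measurement (so the resulting circuit lies in $\mathcal{C}_\nu$), and then postselects on outcome $0$ across all gadget ancillas. The only difference is one of exposition: you spell out the deferred-measurement step and the fact that each ancilla is untouched after its CNOT, which the paper leaves implicit in the statement that ``the $HT$ gates together with the computational basis measurements implement a product measurement''; this extra care is welcome but does not change the argument.
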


\begin{proof}
We first show how we can simulate the circuit $Q$ using circuits from $\mathcal C_\nu$ with postselection. For each $T$ gate in $Q$, we replace it by the gadget $\mathcal G$ defined above. If the number of $T$ gates is $s$, then this procedure produces a new circuit $C$ with $s$ new lines. Now, note that the new circuit $C$ belongs to the class $C_\nu$ since the $HT$ gates together with the computational basis measurements implement a product measurement. Now, if we postselect on outcome 0 for all the measurements in the new lines, then each gadget $\mathcal G$ would implement the $T$ gate. Hence,  $\mathcal C_\nu$ with postselection would weakly simulate $Q$. Now, since we have the resource of postselection, it follows that $Q$ with postselection can be weakly simulated by $\mathcal C_\nu$ with postselection.
\end{proof}

We now make the following definition (recall notation in Eq.\,\eqref{marginals}: we use similar notation for conditional probabilities) to capture the power of subsets of Clifford computational tasks with postselection. 

\begin{definition} ($\postcnu$)
Let $\mathcal C_\nu$ be a subset of Clifford computational tasks. A language $L \in \postcnu$ if there exists an error tolerance $0<\epsilon< \frac 12$, and a uniform family $\{C_w\}_w$ of circuits in $\mathcal C_\nu$ with $n+p(n)$ lines (call these lines $l_1,\ldots, l_n, a_1,\ldots, a_p$, where $p=p(n)$), where $n=|w|$  and $p$ is some polynomial, such that 
$$ p_{C_w}^{\{a_1,\ldots,a_N\}} (00\ldots 0) > 0, $$
$$ w \in L \implies p_{C_w}^{\{l_1\}|\{a_1,\ldots,a_N\}} (1|00\ldots 0) \geq 1-\epsilon,  $$
\begin{equation}
\label{postcnuCw}
w \notin L \implies p_{C_w}^{\{l_1\}|\{a_1,\ldots,a_N\}} (0|00\ldots 0) \geq 1-\epsilon.
\end{equation}
\end{definition}

We will use the definition of $\postBQP$ given in \cite{Bremner}, which allows for multiple postselected lines. Note that this is equivalent to the definition given in \cite{Aaronson} where $\postBQP$ was introduced, which allows for only single lines. We now show that the class just defined is equal to $\postBQP$.
\begin{lemma}
\label{postcnupostBQP}
Let $\nu = $ (IN(BITS), NONADAPT, OUT(PROD)). Then, $\postcnu = \postBQP$.
\end{lemma}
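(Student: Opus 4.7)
The plan is to prove the two inclusions $\postcnu \subseteq \postBQP$ and $\postBQP \subseteq \postcnu$ separately, with the second being the substantive direction.

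For the easy inclusion $\postcnu \subseteq \postBQP$, I would observe that any circuit $C_w \in \mathcal C_\nu$ is already a polynomial-size quantum circuit: the computational basis input and the Clifford portion are standard, and (as explained in the body of the paper) a product measurement $\{U_i\}$ can be implemented by applying $U_i^\dagger$ to qubit $i$ followed by a computational basis measurement. Since the definition of $\postcnu$ postselects on certain output bits equalling $0$ and reads off a designated output bit, this fits directly into the multi-line postselection definition of $\postBQP$ from \cite{Bremner}, with the same error tolerance $\epsilon$.

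For the harder inclusion $\postBQP \subseteq \postcnu$, I would start from a $\postBQP$ language $L$ with associated family $\{Q_w\}$ of quantum circuits, postselection register $A$, and output line. Because the Clifford $+T$ gate set is universal for quantum computation, I may assume without loss of generality that each $Q_w$ is built solely from basic Clifford gates and $T$ gates, with at most polynomially many of each. Now invoke Lemma \ref{lemma:simulateQ}: replace each $T$ gate in $Q_w$ by the gadget $\mathcal G$ from \eqref{eq:gadgetG}, which introduces one fresh ancilla line per $T$ gate, initialized to $\ket 0$ and measured in a product basis (via $HT$ followed by a computational basis measurement). Call the resulting circuit $C_w$. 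Then $C_w \in \mathcal C_\nu$, since the inputs are computational basis states, no intermediate adaptivity is used, and the final measurement is a product measurement. Conditioned on every gadget ancilla yielding outcome $0$, the non-ancilla wires of $C_w$ evolve exactly as $Q_w$ did.

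To finish, I would combine the gadget postselections with the original $\postBQP$ postselection register $A$ into a single set of lines to postselect on outcome $0$. The key identity is the conditional-probability factorization
\begin{equation*}
p_{C_w}^{\{l_1\}\,|\,A'} (y \mid 0\ldots 0) = p_{Q_w}^{\{l_1\}\,|\,A}(y \mid 0\ldots 0),
\end{equation*}
where $A'$ is the union of $A$ with the gadget ancillas; this holds because the gadgets implement $T$ exactly on the postselected branch. The probability of the combined postselection event is positive (it is at least $2^{-s}$ times the original $\postBQP$ postselection probability, where $s$ is the number of $T$ gates, which is polynomial), so the definition of $\postcnu$ is satisfied with the same $\epsilon$. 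The main obstacle is bookkeeping: making sure the gadget ancilla outcomes are independent of the branch one postselects on, so that the conditional distribution on the output line matches exactly, and verifying that conditioning on ``all postselected bits $=0$'' in $C_w$ is equivalent to sequentially conditioning on the gadget outcomes and then on the original $\postBQP$ register. Both points follow from the explicit action of $\mathcal G$ established in Lemma \ref{Tgadget}, so no new ingredient is needed.
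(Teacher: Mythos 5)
Your proposal is correct and takes essentially the same route as the paper: both directions are handled identically, with the substantive $\postBQP \subseteq \postcnu$ inclusion established by taking the $\postBQP$ witness circuit to be Clifford$+T$, invoking Lemma \ref{lemma:simulateQ} to replace each $T$ gate by the gadget $\mathcal G$, and postselecting jointly on the gadget ancillas and the original postselection register. Your added remarks on the conditional-probability bookkeeping and the positivity of the combined postselection event are just a more explicit spelling-out of what the paper's terser proof leaves implicit, and both points do indeed follow from Lemma \ref{Tgadget} as you note.
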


\begin{proof}
The forward direction is immediate, since extended Clifford circuits are a special case of general quantum circuits. To prove the backward direction, let $L\in \postBQP$. Then there exists an error tolerance $0<\epsilon< \frac 12$, and a uniform family $\{Q_w\}_w$ of quantum circuits consisting of the basic Clifford gates and $T$ gates with $n+p(n)$ lines (call these lines $l_1,\ldots, l_n, b_1,\ldots, b_N$, where $p=p(n)$), where $n=|w|$  and $p$ is some polynomial, such that 
$$ p_{Q_w}^{\{b_1,\ldots,b_N\}} (00\ldots 0) > 0, $$
$$ w \in L \implies p_{Q_w}^{\{l_1\}|\{b_1,\ldots,b_p\}} (1|00\ldots 0) \geq 1-\epsilon,  $$
$$ w \notin L \implies p_{Q_w}^{\{l_1\}|\{b_1,\ldots,b_N\}} (0|00\ldots 0) \geq 1-\epsilon.  $$
By Lemma \ref{lemma:simulateQ}, for each $Q_w$, there exists an extended Clifford circuit $C_w\in \mathcal C_\nu$ that, with postselection, simulates $Q_w$ with postselection. If $s$ is the number of $T$ gates in $Q_w$, then $C_w$ has $n+p(n)+s$ lines. Postselecting on the last $p(n)+s$ lines, it follows that the set of circuits $\{C_w\}$ satisfies the definition given for $\postcnu$. Hence, $L\in \postcnu$.
\end{proof}

\begin{lemma}
\label{postcnupostBPP}
 Let $\nu = $ (IN(BITS), NONADAPT, OUT(PROD)). If $\mathcal C_\nu \in \P\weak(n)$, then $\postcnu \subseteq \postBPP$.
\end{lemma}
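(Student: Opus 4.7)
The plan is to exhibit, for every language $L\in\postcnu$, a $\postBPP$ decision procedure that simulates the postselected extended Clifford computation classically, by composing the assumed $\weak(n)$ simulator with the classical postselection built into $\postBPP$. The point is that a $\weak(n)$ simulator for $\mathcal C_\nu$ returns, in polynomial time, a single sample from the full joint output distribution of any $\mathcal C_\nu$ circuit on all $n+p(n)$ wires, and a $\postBPP$ machine is by definition permitted to postselect its random tape on any polynomial-time checkable event of positive probability. Stringing these together converts the postselection on the $a_j$ wires that defines $\postcnu$ into an ordinary classical postselection.

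Concretely, I would take $L\in\postcnu$ with promised uniform family $\{C_w\}_w$, output line $l_1$, and postselection lines $a_1,\ldots,a_N$ (where $N=p(|w|)$), and build a $\postBPP$ machine $M$ as follows. On input $w$, $M$ first uses the uniformity of $\{C_w\}$ to write down the description of $C_w$ in $\poly(|w|)$ time. It then invokes the assumed efficient $\weak(n)$ simulator $S$ on $\langle C_w,\,[n+N]\rangle$ to draw a single sample $(y_{l_1},\ldots,y_{l_n},y_{a_1},\ldots,y_{a_N})$. Finally $M$ sets its postselection flag to $[\,y_{a_1}=\cdots=y_{a_N}=0\,]$ and its output bit to $y_{l_1}$.

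Correctness follows directly from the defining property of $\weak(n)$-simulation: the sample produced by $S$ is distributed exactly as $p_{C_w}$, so
\begin{equation*}
\Pr[\text{flag}=1] \;=\; p_{C_w}^{\{a_1,\ldots,a_N\}}(0\ldots 0) \;>\;0,
\end{equation*}
where positivity holds by the $\postcnu$ promise. Conditioning on $\text{flag}=1$,
\begin{equation*}
\Pr[y_{l_1}=1 \mid \text{flag}=1] \;=\; p_{C_w}^{\{l_1\}\mid\{a_1,\ldots,a_N\}}(1 \mid 0\ldots 0),
\end{equation*}
which is $\geq 1-\epsilon$ when $w\in L$ and $\leq \epsilon$ when $w\notin L$ by \eqref{postcnuCw}. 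Since $\epsilon<\tfrac12$, this meets the $\postBPP$ acceptance criterion, so $L\in\postBPP$.

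I do not anticipate a serious technical obstacle here; the content is essentially a careful composition of definitions. The only two bookkeeping points worth flagging are (i) that although $L\in\postcnu$ only requires postselection on the $a_j$ wires, the assumed simulator in the hypothesis is $\weak(n)$ (sampling a specified subset), and since $\P\weak = \P\weak(n)$ we may freely take $I=[n+N]$ so that the full joint sample is available in one call; and (ii) that the $\postBPP$ machine's postselection event is polynomial-time checkable (a conjunction of $N$ bit equalities) and of strictly positive probability, which is exactly what the standard $\postBPP$ definition requires.
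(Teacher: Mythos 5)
Your proof is correct and follows essentially the same route as the paper: invoke the assumed $\weak(n)$ simulator to obtain a classical randomized process whose joint output distribution on all $n+p(n)$ wires matches $p_{C_w}$ exactly, observe that matching joint distributions force matching conditional distributions given the postselection event, and conclude that the resulting classical machine with postselection on the $a_j$ wires satisfies the $\postBPP$ promise. Your write-up is if anything a bit more explicit than the paper's (which phrases the same argument in terms of equality of marginal and conditional probabilities rather than constructing the $\postBPP$ machine step by step), and your bookkeeping remarks about $\weak = \weak(n)$ and poly-time checkability of the flag are accurate.
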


\begin{proof}
Let $L \in \postcnu$. Then there exists an error tolerance $0<\epsilon< \frac 12$, and a uniform family $\{C_w\}_w$ of circuits in $\mathcal C_\nu$ with $n+p(n)$ lines (call these lines $l_1,\ldots, l_n, a_1,\ldots, a_p$, where $p=p(n)$), where $n=|w|$  and $p$ is some polynomial, such that Eq.\,\eqref{postcnuCw} holds.

But $\mathcal C_\nu \in \P\weak(n)$. Hence, for all circuits $Q_w \in \mathcal C_\nu$, there exists a classical randomized circuit $C_w$ with $n+p$ lines such that 
$$ p_{Q_w}^{\{l_1,\ldots, l_n, a_1,\ldots, a_p\}} (y) = p_{C_w}^{\{l_1,\ldots, l_n, a_1,\ldots, a_p\}} (y).$$

For any subsets $I, J\subseteq [n]$ of lines, similar relations hold for marginal probabilities and conditional probabilities: $ p_{Q_w}^I (y) = p_{C_w}^I (y)$ and $p_{Q_w}^{I|J} (y|z) = p_{C_w}^{I|J} (y|z)$. 
This implies that
$$ p_{Q_w}^{\{l_1\}|\{a_1,\ldots,a_p\}} (1|00\ldots 0) = p_{C_w}^{\{l_1\}|\{a_1,\ldots,a_p\}} (1|00\ldots 0),$$
and hence $Q_w$ obey Eq.\,\eqref{postcnuCw}. This implies that $L \in \postBPP$. Therefore, $\postcnu \subseteq \postBPP$.
\end{proof}

\begin{theorem}
\label{Thm3}
Let $\nu = $ (IN(BITS), NONADAPT, OUT(PROD)). If $\mathcal C_\nu \in \P\weak(n)$, then $\PH$ collapses to the third level.
\end{theorem}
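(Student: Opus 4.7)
The plan is to combine the two preceding lemmas with three established external results: Aaronson's equality $\postBQP = \PP$, Toda's theorem $\PH \subseteq \P^{\PP}$, and the fact that $\postBPP$ lies low in the polynomial hierarchy, specifically $\postBPP \subseteq \mathsf{BPP}^{\NP} \subseteq \Sigma_3^{\P}$ (a result of Han, Hemaspaandra and Thierauf). The argument is a short cascade of inclusions, with essentially no new quantum-specific ingredient needed beyond what Lemmas \ref{lemma:simulateQ}--\ref{postcnupostBPP} already supply.

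First I would apply Lemma \ref{postcnupostBQP} to identify $\postcnu$ with the full power of $\postBQP$, and Lemma \ref{postcnupostBPP}, which uses the hypothesis $\mathcal C_\nu \in \P\weak(n)$, to deduce $\postcnu \subseteq \postBPP$. Chaining these gives $\postBQP \subseteq \postBPP$. Using Aaronson's theorem, this rewrites as $\PP \subseteq \postBPP$. By Toda's theorem, $\PH \subseteq \P^{\PP}$, and therefore $\PH \subseteq \P^{\postBPP}$. Since $\postBPP$ is already contained in the third level of $\PH$, the right-hand side collapses into that same level, yielding $\PH \subseteq \Sigma_3^{\P}$ as desired.

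The only step requiring a bit of care is the absorption of the outer $\P$ oracle access into $\postBPP$ so as to remain in the third level. This can be handled either by observing that $\postBPP \subseteq \Delta_3^{\P} = \P^{\Sigma_2^{\P}}$, so that $\P^{\postBPP} \subseteq \P^{\Delta_3^{\P}} = \Delta_3^{\P}$ (since $\Delta_3^{\P}$ is closed under polynomial-time Turing reductions), or alternatively by invoking Toda's stronger form $\PH \subseteq \mathsf{BPP}^{\PP}$ and then folding a $\mathsf{BPP}$ oracle that queries $\mathsf{BPP}^{\NP}$ into a single $\mathsf{BPP}^{\NP}$ computation via standard error amplification. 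I do not anticipate a genuine obstacle here: all the quantum-specific work, namely the $T$-gate gadget $\mathcal G$ and its use in simulating arbitrary Clifford$+T$ circuits with postselection, has already been discharged by the preceding lemmas, and the theorem itself is essentially a bookkeeping exercise that assembles those pieces with the $\postBQP$ framework.
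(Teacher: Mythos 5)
Your proposal is correct and follows essentially the same route as the paper: both chain Lemmas \ref{postcnupostBQP} and \ref{postcnupostBPP} with Aaronson's $\postBQP=\PP$ and Toda's theorem to get $\PH\subseteq\P^{\postBPP}$, then finish with $\postBPP\subseteq\BPP^{\NP}\subseteq\Sigma_3^p$. The only cosmetic difference is that you spell out more carefully why the outer $\P$ oracle can be absorbed into the third level, whereas the paper simply writes $\P^{\BPP^\NP}=\BPP^\NP$; both are fine.
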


\begin{proof}
By Lemmas \ref{postcnupostBQP} and \ref{postcnupostBPP}, if $\mathcal C_\nu \in \P\weak(n)$, then $\postBPP \supseteq \postcnu = \postBQP$. By Aaronson's Theorem, $\PP = \postBQP$ \cite{Aaronson}, and by Toda's Theorem, $\PH \subseteq \P^{\#\P}$ \cite{Toda}. Hence, we get the following string of inclusions:
$$ \PH \subseteq \P^{\#\P} = \P^\PP =\P^\postBQP = \P^{\postcnu} \subseteq \P^\postBPP \subseteq \P^{\BPP^\NP} = \BPP^\NP \subseteq \Sigma_3^p ,$$
which implies that $\PH$ collapses to the third level.
\end{proof}

\section{Proof of Theorem 4}
\label{sec:thm4}
 
 Consider the proof of Theorem \ref{Thm1}. Note that the circuit $M_{f}$ is unitary. Hence, an even stronger result than 
Theorem \ref{Thm1} is true: if we replaced nonadaptive circuits with unitary ones (call this UNITARY), the simulation complexity is still $\#\P$-hard. In other words, 
\begin{lemma}
Let $\nu =$ (IN(PROD), UNITARY, OUT(BITS)). Then the $\strong(n)$-simulation of $\mathcal C_\nu$ is $\#\P$-hard.
\end{lemma}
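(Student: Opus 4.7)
The plan is to revisit the proof of Theorem~\ref{Thm1} and simply observe that the Clifford circuit $M_f$ constructed there already lies in the smaller class with the UNITARY ingredient in place of NONADAPT; no new ideas are needed, and the same reduction from $\AbsSAT$ goes through verbatim.

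First I would trace through the construction of $M_f$ to confirm the complete absence of intermediate measurements. The base circuit $Q_f$ is built out of only basic Clifford gates and $T$ gates (see Lemma~\ref{3CNFCliffordT}), so in particular it contains no measurements. To form $A_f$ one replaces each $T$ gate by the CNOT gate $\CX_{l_j,a_j}$ coupled to a fresh ancilla; this substitution introduces only new Clifford gates and no new measurements, so $A_f$ is a unitary Clifford circuit. Finally, $M_f$ is obtained by sandwiching $A_f$ between layers of single-qubit $H$ and $X$ gates and placing all computational basis measurements at the very end. Hence $M_f$ is a genuine unitary Clifford circuit acting on the product state input $\ket{\pi/4}^{\otimes K} \otimes \ket{0}^{\otimes (n+1)} \otimes \anc 0$, with all measurements deferred to the output stage.

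Once this bookkeeping is verified, the rest of the proof of Theorem~\ref{Thm1} furnishes, without modification, a polynomial-time reduction from the $\#\P$-hard problem $\AbsSAT$ to $\strong(n)$-simulation of $\mathcal C_\nu$ with $\nu = $ (IN(PROD), UNITARY, OUT(BITS)): the joint outcome probability $\Pr(0_{a_1}\ldots 0_{a_K}, 0_1\ldots 0_{n+1}, \vec 0_A)$ recovers $S(f)$ via the identity
\[
 \left| \sum_x (-1)^{f(x)} \right| = 2^{n+K/2} \sqrt{\Pr(0_{a_1}\ldots 0_{a_K}, 0_1\ldots 0_{n+1}, \vec 0_A)}.
\]
There is essentially no obstacle here --- the content of the lemma is just that Theorem~\ref{Thm1} is in fact stronger than stated, because the reduction never needed any intermediate measurements to begin with. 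The only thing to check is the elementary inclusion of ingredients UNITARY $\subseteq$ NONADAPT, which confirms that moving to the more restrictive class is indeed a strengthening of the hardness result.
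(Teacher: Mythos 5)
Your proposal is correct and matches the paper's own reasoning exactly: the paper likewise observes that the circuit $M_f$ from the proof of Theorem~\ref{Thm1} is already unitary (no intermediate measurements), so the $\AbsSAT$ reduction immediately yields the stronger statement with UNITARY in place of NONADAPT.
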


The $\strong(n)$-simulation of $\mathcal C_\nu$ is equivalent to the following problem:
\newline\textbf{Input}: $\langle T, y\rangle$, where $T = ( \ket{x}, B, \ket{\alpha})$, $B$ is a unitary circuit, $x,y \in \{0,1\}^n$, $\alpha = \alpha_1\ldots\alpha_n$ and each $\ket {\alpha_i} \in \mathbb C_2$.\newline
\textbf{Output}: $p_T(y) = |\langle \alpha_1^{y_1}\ldots\alpha_n^{y_n}| B|x\rangle |^2$.

Now, let $\mu =$ (IN(BITS), UNITARY, OUT(PROD)), then the $\strong(n)$-simulation of $\mathcal C_\mu$ is equivalent to the following problem: 
\newline
\textbf{Input}: $\langle T', y\rangle$, where $T' = ( \ket{\alpha_1^{y_1}\ldots\alpha_n^{y_n}}, B^\dag, \{ I_2 \}_i)$ , $B$ is a unitary circuit, $y \in \{0,1\}^n$ and $I_2$ is the $2 \times 2$ identity gate.
\newline
\textbf{Output}: $p_{T'}(x) = |\langle x| B^\dag|\alpha_1^{y_1}\ldots\alpha_n^{y_n}\rangle |^2 =|\langle \alpha_1^{y_1}\ldots\alpha_n^{y_n}| B|x\rangle |^2= p_T(y)$.

Since both problem instances can be transformed easily to each other, and since both problems involve calculating the same quantity, we conclude that the $\strong(n)$-simulation of $\mathcal C_\mu$ is also $\#\P$-hard. If it is $\#\P$-hard to simulate this class of unitary circuits, then it must be $\#\P$-hard to simulate the same class but with unitary circuits replaced by nonadaptive circuits. Therefore, we obtain the following theorem:

\begin{theorem}
\label{Thm4}
Let $\nu =$ (IN(BITS), NONADAPT, OUT(PROD)). Then the $\strong(n)$-simulation of $\mathcal C_\nu$ is $\#\P$-hard.
\end{theorem}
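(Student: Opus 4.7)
The plan is to exploit a duality between inputs and outputs for unitary circuits, together with a strengthening of Theorem \ref{Thm1}. First I would revisit the construction of the circuit $M_f$ in the proof of Theorem \ref{Thm1} and observe that it is in fact a \emph{unitary} circuit (it uses only basic Clifford gates and Hadamards, with no intermediate measurements). Thus the same argument actually proves a stronger statement: the $\strong(n)$-simulation of the class with ingredients (IN(PROD), UNITARY, OUT(BITS)) is $\#\P$-hard, where UNITARY restricts the middle circuit to have no intermediate measurements at all. I would state this as an intermediate lemma before turning to Theorem \ref{Thm4}.

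The key step is then a simple input/output duality. For a task $T=(\ket x, B, \ket\beta)$ with $B$ unitary, computational basis input $\ket x$ and product measurement direction $\ket\beta = \ket{\alpha_1^{y_1}}\cdots \ket{\alpha_n^{y_n}}$, the outcome probability is $|\langle \beta | B | x\rangle|^2$. Taking the adjoint gives
\[
p_T(y) = |\langle \beta | B | x\rangle|^2 = |\langle x | B^\dag | \beta\rangle|^2 = p_{T'}(x),
\]
where $T'=(\ket\beta, B^\dag, \{I_2\}_i)$ now has a product state input $\ket\beta$, unitary circuit $B^\dag$, and trivial (computational-basis) product measurement. Since Clifford circuits are closed under taking adjoints (each basic Clifford gate has a Clifford inverse expressible in $O(1)$ basic gates), the transformation $T \leftrightarrow T'$ is efficient in both directions. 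Hence an efficient $\strong(n)$-simulator for (IN(BITS), UNITARY, OUT(PROD)) would yield one for (IN(PROD), UNITARY, OUT(BITS)), which by the strengthened Theorem \ref{Thm1} is $\#\P$-hard.

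Finally, since a unitary Clifford circuit is a special case of a nonadaptive Clifford circuit (one with zero intermediate measurements), the class (IN(BITS), UNITARY, OUT(PROD)) is contained in (IN(BITS), NONADAPT, OUT(PROD)), so hardness of $\strong(n)$-simulation lifts from the smaller class to the larger one by the subset rule in Appendix \ref{sec:SimplifyingRules}. This yields Theorem \ref{Thm4}.

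I do not expect any serious obstacle here: the proof is essentially a two-line reduction, and the only thing to verify carefully is that the circuit $M_f$ built in Theorem \ref{Thm1} truly contains no intermediate measurements (it does not — all measurements in $M_f$ are terminal). The main conceptual point to highlight is the symmetry between preparing a nontrivial product state at the input and performing a product measurement at the output of a unitary circuit, which is exactly the ``symmetry between inputs and outputs'' remarked upon in Section \ref{sec:results}.
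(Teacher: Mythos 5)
Your proposal is correct and matches the paper's proof of Theorem \ref{Thm4} essentially verbatim: observe that $M_f$ is unitary to strengthen Theorem \ref{Thm1} to the UNITARY case, apply the duality $|\langle\beta|B|x\rangle|^2 = |\langle x|B^\dag|\beta\rangle|^2$ to swap input and output roles, and then lift from UNITARY to NONADAPT by the subset rule. Your explicit note that the Clifford group is closed under adjoints (so the transformation $T\leftrightarrow T'$ is efficient) is a small detail the paper leaves implicit, but the argument is otherwise identical.
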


\section{Proof of Theorem 5}
\label{sec:thm5}

\begin{theorem}
\label{Thm5}
Let $\nu =$ (IN(PROD), NONADAPT, OUT(PROD)). Then $\mathcal C_\nu \in \P\strong(1)$.
\end{theorem}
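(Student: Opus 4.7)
The plan is to reduce the single-line marginal to a one-site Pauli observable and then use standard Gottesman-Knill Heisenberg-picture propagation. Two observations drive the proof: (i) a single-line strong simulation asks only for a marginal probability, so by completeness the product measurements on the unqueried lines do not enter the computation at all; and (ii) a single-qubit projector has only a four-term Pauli (Bloch) expansion, so after (i) only a constant number of Pauli observables need be propagated through $B$.

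First I would eliminate the unqueried measurements. With input $\rho_\alpha = \bigotimes_k \ket{\alpha_k}\bra{\alpha_k}$ and POVM elements $\Pi_k^{y} = \ket{\beta_k^{y}}\bra{\beta_k^{y}}$, the full outcome probability is $p_T(y_1,\ldots,y_n) = \tr[(\Pi_1^{y_1}\otimes\cdots\otimes\Pi_n^{y_n})\, B\rho_\alpha B^\dag]$. Summing over $\{y_j\}_{j\ne i}$ and using $\sum_{y_j}\Pi_j^{y_j} = I$ on every unqueried line yields
\[
p_T^{\{i\}}(y_i) = \tr[(\Pi_i^{y_i}\otimes I^{\otimes(n-1)})\, B\rho_\alpha B^\dag],
\]
so the product measurements on lines $j\ne i$ drop out of the computation entirely and only the single-site projector $\Pi_i^{y_i}$ enters.

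Next I would Bloch-decompose and pass to the Heisenberg picture. Writing $\Pi_i^{y_i} = \frac12(I + \vec m\cdot\vec\sigma)$, cyclicity of the trace gives
\[
p_T^{\{i\}}(y_i) = \tfrac12 + \tfrac12 \sum_{P\in\{X,Y,Z\}} m_P\, \tr[B^\dag (P_{(i)}\otimes I^{\otimes(n-1)}) B\, \rho_\alpha],
\]
where $P_{(i)}$ denotes $P$ on line $i$ and $I$ elsewhere. Because $B$ is a Clifford unitary, each of the three operators $B^\dag(P_{(i)}\otimes I)B$ is an $n$-qubit Pauli $s_P\, Q_1^{(P)}\otimes\cdots\otimes Q_n^{(P)}$ with $s_P\in\{\pm 1\}$, computable in $\poly(N)$ time by tracking Heisenberg evolution through the basic Clifford gates of $B$. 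Since $\rho_\alpha$ is a product state, the trace factorizes as $s_P \prod_k \bra{\alpha_k}Q_k^{(P)}\ket{\alpha_k}$, and each single-qubit expectation equals $1$ or a coordinate of the Bloch vector of $\ket{\alpha_k}$. Summing the four contributions produces $p_T^{\{i\}}(y_i)$ in polynomial time.

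The argument is really a minor extension of JV1, with the computational-basis measurement on the queried line replaced by a general basis $\{\ket{\beta_i^0},\ket{\beta_i^1}\}$. There is no serious obstacle; the only mildly subtle point is step one, the observation that product measurements on the unqueried lines are invisible to single-line strong simulation. The same Bloch-expansion-plus-Gottesman-Knill argument, with the four-term expansion replaced by a $4^b$-term expansion on a block of $b$ queried lines, then yields the $\strong(b)$-simulability for any constant $b$ mentioned in the remark following the theorem.
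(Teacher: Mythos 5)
Your proof is correct and takes essentially the same approach as the paper: reduce the single-line marginal to an expectation of a one-site observable, expand that observable in a constant-size Pauli basis, propagate each Pauli term through the Clifford circuit in the Heisenberg picture, and factorize the resulting expectation over the product input state. The only cosmetic difference is that you Bloch-decompose the projector $\Pi_i^{y_i}$ directly (four terms), whereas the paper decomposes $U$ into Paulis and then forms $UZU^\dagger$; both yield a constant number of Pauli observables and are equivalent.
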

\begin{proof}

We use the following notation: for any single-qubit operator $O$, let $O_1 = O \otimes I \otimes \ldots \otimes I$. 
Given a Clifford computational task $T = (|\alpha_1\ldots \alpha_n\rangle, B, \{U, I,\ldots, I\}) \in \mathcal C_\nu$, and a bit $i\in \{0,1\}$, we shall describe an algorithm to compute $p_i :=p_T^{\{1\}}(i)$. WLOG, $B$ is a unitary circuit. 

Since $p_0+p_1=1$, it suffices to be able to calculate $p_0-p_1$ efficiently. 
By Born's rule, this is given by
\begin{equation}
\label{diff}
p_0 - p_1 = \langle \alpha | B^\dag (UZU^\dag)_1 B\ket\alpha.
\end{equation}

Since the Pauli matrices $\{\sigma^i\}_i$ form a basis for the set of $2\times 2$ matrices , we can write 
$$ U = \sum_{i=0}^3 a_i \sigma^i ,$$
for some $a_i \in \mathbb C$.
Hence, 
$$ UZU^\dag = \sum_{i,j=0}^3 a_i \bar{a}_j \sigma^i Z \sigma^j.$$

But $\sigma^i Z \sigma^j$ is a Pauli operator. Since the basic Clifford gates map Pauli operators to Pauli operators, 
$$ B^\dag (\sigma^i Z\sigma^j)_1 B = \gamma_{ij} P_1^{ij}\otimes \ldots \otimes P_n^{ij}.$$

Putting this into Eq.\,\eqref{diff}, we get an expression for $p_0-p_1$.
\begin{eqnarray}
\label{expressionFordiff}
p_0-p_1 &=& \sum_{i,j=0}^3 a_i \bar a_j \gamma_{ij} \bra{\alpha_1\ldots\alpha_n} P_1^{ij}\otimes \ldots \otimes P_n^{ij}  \ket{\alpha_1\ldots\alpha_n} \nonumber\\
&=& \sum_{i,j=0}^3 a_i \bar a_j \gamma_{ij} \prod_{k=1}^n \bra{\alpha_k} P_k^{ij} \ket{\alpha_k}.
\end{eqnarray}

We now analyze the running time of our algorithm. Computing $\gamma_{ij} P_1^{ij}\otimes \ldots \otimes P_n^{ij}$ takes $O(n^2)$-time. The formula given in Eq.\,\eqref{expressionFordiff} involves a sum of 9 terms. Each term involves computing $n$ expectation values of $2\times 2$ matrices. Hence, this step takes $O(n)$-time. Overall, the algorithm runs in $O(n^2) = O(N^2)$-time, where $N$ is the number of gates in the circuit (which we assumed to contain no extraneous lines). Hence, $\mathcal C_\nu \in \P\strong(1)$.

\end{proof}

\section{Proof of Theorem 6}
\label{sec:thm6}

\begin{theorem}
\label{Thm6}
Let $\nu =$ (IN(BITS), ADAPT, OUT(PROD)). Then $\mathcal C_\nu \in \P\weak(1)$.
\end{theorem}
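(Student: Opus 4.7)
The plan is to combine the standard Gottesman-Knill simulation of the adaptive Clifford part with the Pauli-expansion trick from the proof of Theorem \ref{Thm5}. The key observation is that although each $U_i$ is an arbitrary single-qubit unitary that in general destroys the stabilizer structure of $\ket\phi = B\ket{0\ldots 0}$, sampling a \emph{single} output bit requires only one scalar, namely a single marginal probability on one line, so we never need to represent any non-stabilizer state explicitly.

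Given a task $T = (\ket{0\ldots 0}, B, \{U_j\}_{j=1}^n)\in\mathcal C_\nu$ and a target line $i \in [n]$, the first phase of the algorithm classically simulates $B$ using the stabilizer tableau: each basic Clifford gate updates the tableau in $O(n^2)$ time, and each intermediate computational-basis measurement is sampled from its true conditional distribution and updates the tableau, with outcomes feeding back into the subsequent Clifford unitaries $C_j(x_1,\ldots,x_j)$ as in Eq.~\eqref{eq:adaptiveCircuit}. This phase samples $x_1,\ldots,x_K$ from its correct joint probability and leaves us with a tableau for the normalized stabilizer state $\ket\phi$.

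To extract the marginal on line $i$, note that tracing out the other output lines is harmless because $\sum_{y_j}U_j\ket{y_j}\bra{y_j}U_j^\dag = I$ kills all dependence on $U_{j\ne i}$; thus, conditioned on $x_1,\ldots,x_K$, the $i$-th output bit equals $0$ with probability $p_0 = \bra\phi (U_i\ket{0}\bra{0}U_i^\dag)_i \ket\phi$. Expanding the Hermitian traceless operator $U_i Z U_i^\dag$ in the Pauli basis as $U_i Z U_i^\dag = c_1 X + c_2 Y + c_3 Z$ with real coefficients $c_k$ gives
\[
p_0 \;=\; \tfrac12\Bigl(1 + \sum_{k=1}^3 c_k\,\bra\phi \sigma^k_i \ket\phi\Bigr),
\]
and each $\bra\phi \sigma^k_i \ket\phi \in \{-1,0,+1\}$ can be read off the tableau in $O(n^2)$ time by the standard stabilizer measurement rule. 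The algorithm then outputs $0$ with probability $p_0$ and $1$ with probability $1-p_0$, which by the chain rule for probability samples $y_i$ from the correct marginal $p_T^{\{i\}}$ in $\poly(N)$ total time.

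I do not foresee a serious obstacle; the only step that genuinely uses the $\weak(1)$ restriction is the tracing-out that eliminates the $U_{j\ne i}$, and this is precisely what separates Theorem \ref{Thm6} from the $\PH$-collapsing $\weak(n)$ entry in the same row of Table \ref{mainTable}. For $\weak(n)$ one would have to apply every $U_j$ to $\ket\phi$ simultaneously, and the resulting non-stabilizer state can no longer be handled one Pauli expectation at a time.
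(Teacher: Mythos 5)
Your proof is correct, and it takes a more self-contained route than the paper. The paper proves Theorem~\ref{Thm6} in one line by invoking the Aaronson--Gottesman algorithm for Clifford circuits augmented with $d$ non-Clifford gates on $b$ qubits (Section~VII.C of \cite{AaronsonGottesman}) with $d=b=1$, viewing the single basis-change unitary $U_i^\dag$ on line $i$ as the lone non-Clifford gate before a computational-basis readout. You instead build the argument directly: run the tableau simulation through the adaptive Clifford part (sampling the $x_j$'s from their true conditional distribution, which the stabilizer formalism handles without issue), observe that tracing out lines $j\neq i$ removes all dependence on $U_{j\neq i}$, and then compute the conditional marginal on line $i$ by expanding $U_iZU_i^\dag$ in the Pauli basis and reading the three Pauli expectations $\langle\phi|\sigma^k_i|\phi\rangle\in\{-1,0,+1\}$ off the tableau. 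This is essentially the Theorem~\ref{Thm5} Pauli-expansion trick grafted onto Gottesman--Knill, and by the chain rule the sampled bit has the correct marginal. What your approach buys is transparency: no appeal to the more general $O(4^{2bd}n)$ machinery is needed, the role of the $\weak(1)$ restriction (namely that the partial trace kills all but one $U_j$) is made explicit, and it cleanly explains why the adjacent $\weak(n)$ and $\strong(1)$ entries in that row of Table~\ref{mainTable} behave differently. One could even note that your argument, unlike the paper's stated citation, explicitly addresses adaptivity, which the paper's phrasing (``IN(BITS), NONADAPT, OUT(BITS)'') quietly elides.
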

\begin{proof}
This is a special case of the results in Section VIIC of \cite{AaronsonGottesman}, which showed that an IN(BITS), NONADAPT, OUT(BITS) circuit containing $d$ non-Clifford gates, where each gate acts on at most $b$ qubits, can be classically simulated in the WEAK(1) sense in $O(4^{2bd} n+n^2)$-time. In our case, the circuits in $ C_\nu$ can be thought of as containing exactly one non-Clifford gate on the first wire just before the computational-basis measurement. Hence, $d=b=1$, which implies that the algorithm runs in $O(n^2)$-time. This concludes the proof that $\mathcal C_\nu \in \P\weak(1)$.

\end{proof}

\section{Constructing circuits for 3-CNF formulas}

In the proof of Theorem 1, we used the fact that given a 3-CNF formula $f:\{0,1\}^n \rar \{0,1\}$, we can efficiently construct a quantum circuit $A_f$ comprising only the basic Clifford operations and $T$ gates, which acts on the following computational basis states as follows:
\begin{equation}
 A_f \ket x \ket 0 \ket 0_A =  \ket x \ket { f(x)} \ket 0_A,
\end{equation}
where $x\in\{0,1\}^n$ and $\ket{\cdot}_A$ is an ancilla register of size $O(n)$. 

A similar fact was used in the proof of Theorem 2, namely that given a 3-CNF formula $f:\{0,1\}^n \rar \{0,1\}$, we can efficiently construct a classical circuit $C_f$ comprising only Toffoli gates, which acts on the following computational basis states as follows:
\begin{equation}
 C_f (x ,1, 1_A) = (x,f(x),1_A),
\end{equation}
where $x\in\{0,1\}^n$ and $A$ is an ancilla register of size $O(n)$. 

Note that in both circuits $C_f$ and $A_f$, we do not allow for the addition of more ancilla lines or for the discarding of any bit or qubits. This is because for the notion of $\strong(n)$ simulation, all bit or qubit lines have to be accounted for. Hence, we make explicit the reference to the ancilla registers $A$. In this section, we present the details of the above constructions.

Recall the definition of a 3-CNF formula given in \eq{3cnf_formula}. As above, we assume that every variable $x_1,\ldots, x_n$ appears in the formula for $f$, so that $n\leq 3N$, i.e.\@ $n=O(N)$.

\subsection{Constructing $C_f$}

We show that we can implement the function $f$ using Toffoli gates alone. We denote the action of the Toffoli gate on lines $i,j,k$ with inputs $a,b,c$ by $$\Tof_{ijk}(\ldots, a,\ldots, b,\ldots, c,\ldots) = (\ldots, a,\ldots, b,\ldots, c\oplus a \cdot b, \ldots).$$
We use subscripts at the end to indicate a `marginalizing out' of the values of all other wires, for example,
$$\Tof_{143}(a,b,c,d,e)_{235} = (a,b,c\oplus a\cdot d,d,e)_{235} = (b,c\oplus a\cdot d,e).$$

\begin{lemma}
Let $f$ be a 3-CNF formula of the form given by \eq{3cnf_formula} with $n$ variables and $N$ clauses, where $n=O(N)$. Then there exists a classical circuit $C_f$ consisting of $O(N)$ Toffoli gates on $n+1+s(N)$ lines, for some $s(N) =O(N)$ (where we do not allow for the addition of bit lines or the discarding of any bits), such that
\begin{equation}
C_f (x_1,\ldots, x_n, 1, \underbrace{1, \ldots, 1}_{s(N)}) = (x_1,\ldots, x_n, f(x_1,\ldots,x_n), \underbrace{1, \ldots, 1}_{s(N)}).
\label{coff}
\end{equation}
\label{3CNFToffoli}
\end{lemma}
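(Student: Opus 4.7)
The plan is the standard Bennett ``compute--copy--uncompute'' strategy: reversibly compute $f(x)$ into a dedicated work ancilla using only Toffoli gates, XOR the result into the output wire, and then run every Toffoli of the computation backwards so that the workspace is returned to its starting values. Two mild complications arise from \eqref{coff}: every ancilla begins at $1$ rather than at the more convenient value $0$, and no wires may be added or discarded, so all workspace bits must be pre-allocated in $s(N)$ and every one must end back at $1$.

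To handle the $1$-initialization, I would reserve two ancilla wires $\alpha,\beta$ that are never used as the target of any Toffoli, so they remain at $1$ throughout. The gate $\Tof_{\alpha\beta z}$ then implements a NOT on any target $z$, which lets me toggle freely between the $0$-initialized ancillas assumed in the textbook construction and the $1$-initialized ancillas we actually have. With this primitive, the clause-by-clause computation is straightforward: for each clause $C_i = a_{i1}\vee a_{i2}\vee a_{i3}$ I would use three ancillas $l_{i1},l_{i2},l_{i3}$ to hold the negated literals $\bar a_{ij}$ (one Toffoli $\Tof_{x_k\alpha\, l_{ij}}$ turns a $1$ into $\bar x_k$, and a prior flip handles the case $a_{ij}=\bar x_k$), one ancilla $p_i$ (pre-flipped to $0$) to store the partial product $l_{i1}l_{i2}$, and a final ancilla $q_i$ (kept at $1$) to receive $C_i = 1\oplus l_{i1}l_{i2}l_{i3}$ via $\Tof_{p_i\,l_{i3}\,q_i}$. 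A standard AND-cascade $e_k := e_{k-1}q_{k+1}$ on $N-1$ further ancillas (each pre-flipped to $0$) then places $f(x)$ into $e_{N-1}$ using $N-1$ more Toffolis.

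Once $f(x)$ sits in $e_{N-1}$, flipping $y$ from $1$ to $0$ via $\Tof_{\alpha\beta y}$ and then applying $\Tof_{e_{N-1}\alpha y}$ writes $f(x)$ into $y$; after this, I would uncompute by running every Toffoli of the clause and cascade stages in reverse, which exactly resets each $l_{ij}, p_i, q_i, e_k$ to its post-setup value since Toffoli is self-inverse and $\alpha,\beta$ are never disturbed. A final round of $\Tof_{\alpha\beta\cdot}$ gates then flips the initially-zeroed ancillas back to $1$, leaving $A$ at $\vec 1_A$ while $y$ still holds $f(x)$. Counting everything gives $s(N) = 2 + 3N + N + N + (N-1) = O(N)$ ancillas and $O(N)$ Toffolis. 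The only aspect that really needs care is the bookkeeping of which ancillas are currently at $0$ and which at $1$ before each Toffoli fires; the construction itself uses no structurally new ideas beyond standard Bennett-style reversible simulation.
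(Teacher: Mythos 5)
Your proof is correct and rests on the same key ideas as the paper's: Bennett-style compute--copy--uncompute, self-inverse Toffoli gates, and the observation that $1$-initialized ancillas suffice because a pair of permanently-$1$ control wires turns a Toffoli into a NOT, letting you toggle between the $1$'s you are handed and the $0$'s you would rather have. The only substantive difference is in how $f$ is compiled to Toffoli gates in the compute stage. The paper first builds an irreversible circuit of $O(N)$ gates from the basis $\{$AND, OR, NOT, COPY, SWAP$\}$ (copying literals, rearranging them, then evaluating the CNF clause by clause), and afterwards replaces each such gate by a fixed constant-size Toffoli gadget before wrapping the whole thing in the uncomputation trick with a CNOT onto the output wire. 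You instead build the Toffoli circuit directly, computing each clause as $C_i = 1 \oplus \bar a_{i1}\bar a_{i2}\bar a_{i3}$ via De Morgan and then taking an AND cascade of the clause bits. Both yield $O(N)$ Toffolis on $O(N)$ wires; your route skips the intermediate gate basis and is a little more hands-on, while the paper's modularizes the counting by tallying COPY, NOT, SWAP, OR and AND gates separately before the Toffoli replacement. Either way the lemma follows, and your treatment of the $1$-initialization and the final flip-back round matches the spirit of the paper's remark that Toffoli gates are universal precisely because the constant $1$ is available.
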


\begin{remark}
The ancilla bits are initialized to $1$ instead of $0$. This is because the Toffoli gate is universal only if we have the ability to prepare the state $1$. In particular, if the inputs were always just $0$'s, then it would not be possible to create the state $1$. On the other hand, we can prepare $0$ from $1$ since the target bit of $\Tof(1,1,1)$ is 0.
\end{remark}

\begin{proof}
We first show how to compute $f$ on the input $(x_1,\ldots, x_n)$ using AND, OR, NOT, COPY and SWAP gates . Let $k_i$ and $\bar k_i$ be the number of times $x_i$ and $\bar x_i$, respectively, appear as literals in the formula for $f$, i.e.\@ $\sum_i (k_i + \bar k_i)=3N$. By assumption, every variable $x_1,\ldots, x_n$ appears in the formula for $f$, so $k_i + \bar k_i >0$ for all $i$. 

For each $i$, if $k_i >0$, apply the COPY gate $k_i -1$ times to $x_i$ and the COPY gate followed by the NOT gate $\bar k_i$ times to $x_i$. Otherwise, if $k_i =0$ (i.e.\@ $\bar k_i >0$), apply the NOT gate followed by the COPY gate $\bar k_i$ times to $x_i$. This creates the state 
$$(\underbrace{x_1,\ldots, x_1}_{k_1},\ldots,\underbrace{x_n,\ldots, x_n}_{k_n}, \underbrace{\bar x_1,\ldots, \bar x_1}_{\bar k_1},\ldots,\underbrace{\bar x_n,\ldots, \bar x_n}_{\bar k_n}).$$
Note that the number of gates that the above procedure involves is $\sum_{k_i>0} \left[ (k_i-1)+2\bar k_i \right] + \sum_{k_i=0} 2 \bar k_i \leq 2\sum_i(k_i +\bar k_i) = 6N$. 

Applying the SWAP gate up to $3N$ times to the above state, we get the state 
\begin{equation}
\label{maxwidth}
(a_{11}, a_{12}, a_{13}, a_{21},\ldots, a_{N1}, a_{N2}, a_{N3}).
\end{equation}

We now apply the OR and AND gates according to the formula in \eq{3cnf_formula} to get $(a_{11} \vee a_{12} \vee  a_{13}) \wedge (a_{21} \vee a_{22} \vee  a_{23}) \wedge \ldots \wedge (a_{N1} \vee a_{N2} \vee  a_{N3})$. This involves a total of $2N$ OR gates and $N-1$ AND gates. Hence, the resulting circuit $B_f$, whose number of gates is bounded above by $6N + 3N + 2N + N-1 = O(N)$, computes:
$$B_f(x_1,\ldots, x_n) = f(x_1,\ldots,x_n).$$

 Note that the maximum width of $B_f$, which occurs when the state is given by \eqref{maxwidth}, is $3N$.

We now use the fact that the Toffoli gate together with the ability to prepare the ancilla state $1$ is universal for classical computing. In particular, they simulate the above gates as follows:
\begin{eqnarray}
\neg x &=&\Tof_{123}(1,1,x)_3, \nonumber \\
x \land y  &=& [\Tof_{123}\circ\Tof_{453}(x,y,1,1,1)]_3,\nonumber \\ 
\textrm{COPY}(x) &=& [\Tof_{123}\circ\Tof_{243}(x,1,1,1)]_{13}, \nonumber \\
x \lor y &=& [\Tof_{453}\circ \Tof_{123} \circ \Tof_{453} \circ \Tof_{342} \circ\Tof_{341} (x,y,1,1,1)]_3, \nonumber\\
\textrm{SWAP}(x,y) &=& [\Tof_{123}\circ \Tof_{321} \circ \Tof_{123} (x,1,y)]_{13}.
\label{toffoliUniv}
\end{eqnarray}

We append ancilla lines initialized to $1$ to $B_f$, and replace all the gates in $B_f$ by Toffoli gates according to the rules in \eq{toffoliUniv}, and apply additional swap gates (implemented by Toffoli gates) so that the first output of the circuit is $f(x_1,\ldots, x_n)$. Note that we do not discard any bits. Each of the replacements increases the number of ancilla lines by at most 3 and the number of gates by at most 4. Hence, both the total number of lines $a(N)$ and the number of Toffoli gates in the new circuit $B'_f$ are still $O(N)$. The action of $B'_f$ on the computational basis states is given by:
$$ B'_f (x_1,\ldots, x_n, \vec 1) = (f(x_1,\ldots, x_n), j_2, \ldots, j_{a(N)}), $$
where $(j_2,\ldots, j_{a(N)})$ are junk bits.

We now make use of the \textit{uncomputation trick} to reset the junk bits to $1$. Since the Toffoli gates are their own inverse, the inverse of $B'_f$ is obtained by applying the gates in $B'_f$ in the reverse order. Consider the circuit $B''_f$ that is formed as follows:  first apply  $B'_f$ to  $(x_1,\ldots, x_n, \vec 1)$. Introduce a new ancilla line, called $a$, initialized to $0$. Next, apply the CNOT gate $CX_{1a}$. Finally, apply $B'^{-1}_f$ to the first $a(N)$ bits to reset them back to $(x_1,\ldots, x_n, \vec 1)$. A circuit diagram for the above steps is shown in Figure \ref{fig:uncomputation}. This gives 
$$  B''_f (x_1,\ldots, x_n, \vec 1,0) = (x_1,\ldots, x_n, \vec 1,f(x_1,\ldots, x_n)).$$

\begin{figure}
\begin{align}
\Qcircuit @C=1em @R=1em {
\lstick{x_1} & & \multigate{5}{\hspace*{0.15cm} B'_f \hspace*{0.15cm} } & \ctrl{6} & \multigate{5}{B'^{-1}_f} & \qw & \rstick{x_1}\\ 
  & \vdots & &  &  & \vdots & \\
\lstick{x_n} & & \ghost{\hspace*{0.15cm} B'_f \hspace*{0.15cm}} & \qw & \ghost{B'^{-1}_f} & \qw & \rstick{x_n} \\
\lstick{1} & &\ghost{\hspace*{0.15cm} B'_f \hspace*{0.15cm}} & \qw & \ghost{B'^{-1}_f} & \qw & \rstick{1} \\
  & \vdots & & &  & \vdots & \\
\lstick{1} & & \ghost{\hspace*{0.15cm} B'_f \hspace*{0.15cm}} & \qw & \ghost{B'^{-1}_f} & \qw &\rstick{1} \\
\lstick{0} & & \qw & \targ & \qw & \qw & \rstick{f(x_1,\ldots,x_n)}\\
}
\nonumber
\end{align}
\caption{Uncomputation trick, in which the output bits, except for those in the target register, are reset to their input values. The state evolves as follows: $(x_1,\ldots x_n,1,\ldots,1,0) \rightarrow$  $(f(x_1,\ldots,x_n),j_2,\ldots, j_{a(N)},0) \rightarrow$ $ (f(x_1,\ldots,x_n),j_2,\ldots, j_{a(N)},f(x_1,\ldots,x_n))\rightarrow$ $(x_1,\ldots x_n,1,\ldots,1,f(x_1,\ldots,x_n))$. }
\label{fig:uncomputation}
\end{figure}
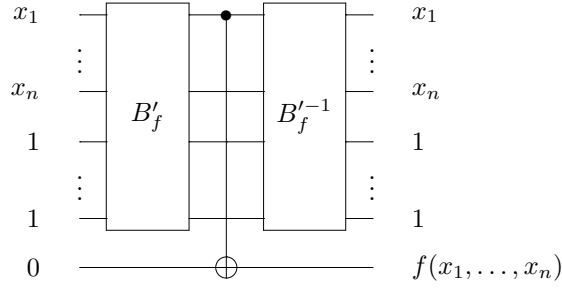

To get the required circuit $C_f$, we need to perform three more simple steps. First, the ancilla bit in the last register has to start from $1$ instead of 0. This can be achieved by applying a NOT gate (implemented by the Toffoli gate and ancillas initialized to 1) to 0. Second, the CNOT gate has to be simulated by a Toffoli gate. This may be achieved by using the fact that 
$$ CX(a,b)_{12} = \Tof_{123}(a,1,b)_{13}.$$
Third, the output has to be of the form \eqref{coff}. This is obtained by applying swap gates at the end of the circuit. These steps add at most a constant number of gates and a constant number of ancilla bits. Hence, the resulting circuit $C_f$ has $O(N)$ gates acting on $O(N)$ lines.

\end{proof}

\subsection{Constructing $Q_f$}

We now show how we can convert $C_f$ to a circuit $Q_f$ that involves only the basic Clifford gates and the T gate.

\begin{lemma}
Let $f$ be a 3-CNF formula of the form given by \eq{3cnf_formula} with $n$ variables and $N$ clauses, where $n=O(N)$. Then there exists a quantum circuit $Q_f$ consisting of $O(N)$ basic Clifford gates and $T$ gates on $n+1+s(N)$ lines (where we do not allow for the addition of qubit lines or the discarding of any qubits), such that
\begin{equation}
Q_f \ket{x_1,\ldots, x_n, 0, 0^{s(N)} } = \ket{ x_1,\ldots, x_n, f(x_1,\ldots,x_n), 0^{s(N)}},
\label{coffq}
\end{equation}
for some $s(N) =O(N)$. 
\label{3CNFCliffordT}
\end{lemma}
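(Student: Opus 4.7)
The plan is to reduce the construction to the Toffoli-only circuit from Lemma \ref{3CNFToffoli} and invoke the standard Clifford$+T$ decomposition of the Toffoli gate, then compensate for the mismatch between the $\ket 1$ ancillas that Toffoli-universality requires and the $\ket 0$ ancillas demanded by the lemma statement.

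First, I would apply Lemma \ref{3CNFToffoli} to obtain a classical reversible circuit $C_f$ on $n+1+s(N)$ lines consisting of $O(N)$ Toffoli gates and satisfying
$$C_f(x_1,\ldots,x_n, 1, \vec 1_A) = (x_1,\ldots,x_n, f(x_1,\ldots,x_n), \vec 1_A).$$
Viewed as a permutation on computational basis states, $C_f$ lifts to a permutation unitary $U_f$ with $U_f \ket{x, 1, \vec 1_A} = \ket{x, f(x), \vec 1_A}$.

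Second, I would replace each Toffoli gate by its well-known decomposition into basic Clifford gates ($H$, $S$, $\CX$) together with $T$ and $T^\dagger$ gates (see, e.g., Nielsen and Chuang). Since $T^\dagger = T^7$ up to global phase, each $T^\dagger$ can itself be expanded into a constant number of $T$ gates, so every Toffoli is replaced by $O(1)$ gates drawn from $\{H, S, \CX, T\}$. The resulting unitary circuit $V_f$ consists of $O(N)$ basic Clifford and $T$ gates and still satisfies
$$V_f \ket{x, 1, \vec 1_A} = \ket{x, f(x), \vec 1_A}.$$

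Third, to reconcile the ancilla initialization, I would conjugate $V_f$ by Pauli $X$ gates on the $(n+1)$th wire and on every ancilla wire, and then undo the $X$ gates only on the ancilla wires at the end. Concretely, setting
$$Q_f = \bigl(I^{\otimes n} \otimes I \otimes X^{\otimes s(N)}\bigr) \, V_f \, \bigl(I^{\otimes n} \otimes X \otimes X^{\otimes s(N)}\bigr),$$
a direct computation on basis states gives $Q_f \ket{x, 0, \vec 0_A} = \ket{x, f(x), \vec 0_A}$, as required. Since $X = H S^2 H$ is a product of basic Clifford gates, this preconditioning and postconditioning adds only $O(N)$ additional gates, so $Q_f$ has $O(N)$ gates in total from $\{H, S, \CX, T\}$.

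The main obstacle is really just a citation: the correctness and constant-size nature of the Clifford$+T$ decomposition of the Toffoli gate. The remaining bookkeeping — translating the $\ket 1$ ancilla convention of Lemma \ref{3CNFToffoli} to the $\ket 0$ ancilla convention here, and tracking that no qubits are introduced or discarded — is essentially forced by the structure already set up in the previous lemma.
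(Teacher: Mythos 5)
Your proof is correct and follows essentially the same route as the paper: invoke Lemma~\ref{3CNFToffoli}, replace each Toffoli by the Nielsen--Chuang Clifford$+T$/$T^\dag$ decomposition, eliminate $T^\dag$ via $T^\dag = T^7$ (in fact this is an exact identity since $T^8 = I$, so the ``up to global phase'' caveat is unnecessary), and reconcile the $\ket 1$-versus-$\ket 0$ ancilla conventions by inserting $X = HS^2H$ gates at the boundary. Your bookkeeping of exactly which wires receive an initial $X$ and which receive a final $X$ is slightly more explicit than the paper's wording, but the argument is the same.
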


\begin{proof}

Using Lemma \ref{3CNFToffoli}, we have a circuit $C_f$ comprising $O(N)$ Toffoli gates satisfying
$$C_f \left(x_1,\ldots, x_n, 1, 1^{s(N)} \right) = \left(x_1,\ldots, x_n, f(x_1,\ldots,x_n), 1^{s(N)}\right).$$

Using the construction presented in \cite{NielsenChuang}, we express each Toffoli gate in terms of the basic Clifford gates, $T$ and $T^\dag$ gates, as follows:
\begin{align}
\Qcircuit @C=1em @R=1em {
& \ctrl{1} &\qw & & & & &         &  \qw & \qw       & \qw                 & \ctrl{2}  & \qw & \qw & \qw                     &\ctrl{2}  & \qw &\ctrl{1} & \qw & \ctrl{1} &\gate{T} & \qw \\
& \ctrl{1} & \qw & & = & & &    & \qw  & \ctrl{1}   & \qw                 & \qw  & \qw     & \ctrl{1} & \qw                 &\qw     & \gate{T^\dag} & \targ & \gate{T^\dag} & \targ & \gate{S} & \qw \\ 
& \targ & \qw & & & &  &         & \gate H & \targ  & \gate{T^\dag} & \targ & \gate T & \targ & \gate{T^\dag}  & \targ & \gate T & \gate{H} & \qw & \qw & \qw &\qw
} 
\nonumber
\end{align}

Since $T^8 =1$, we replace each $T^\dag$ gate above by $T^7$. These replacements increase the number of gates by a constant factor, and hence the total number of gates in the new circuit is still $O(N)$. Finally, we insert $X$ (expressed as X = $HS^2H$) gates at the start and end of the circuit so that the ancilla lines start and terminate in the state $\ket 0$. This gives us a quantum circuit obeying \eq{coffq} with $O(N)$ wires and $O(N)$ gates.

\end{proof}

\end{document}